\documentclass[a4paper,fleqn]{cas-dc}
\usepackage[numbers,sort&compress,longnamesfirst]{natbib}
\usepackage{setspace}
\usepackage{tcolorbox}
\tcbuselibrary{skins, breakable, theorems}
\usepackage{tabularx}  
\usepackage{diagbox}   
\usepackage{booktabs}  
\usepackage{multirow}  
\usepackage{pifont}   
\newcommand{\cmark}{\ding{51}}
\newcommand{\xmark}{\ding{55}}
\usepackage{hyphenat} 
\usepackage{array}     
\usepackage{lipsum}
\usepackage{microtype}
\usepackage{graphicx}
\usepackage{subcaption} 
\usepackage{amsmath,amssymb}
\usepackage{enumitem} 
\usepackage{hyperref} 
\usepackage{amsthm}    
\usepackage{hyperref}  
\allowdisplaybreaks[4] 
\usepackage[linesnumbered,ruled,vlined]{algorithm2e} 
\usepackage{multicol}     
\SetKwProg{Proc}{Procedure}{}{}
\SetKw{KwRet}{Return}
\SetKw{KwParse}{Parse}
\SetKwIF{If}{ElseIf}{Else}{If}{Then}{ElseIf}{Else}{EndIf}
\SetKwFor{For}{For}{Do}{EndFor}
\newtheoremstyle{mydefinition}  
  {0}                        
  {0}                        
  {\itshape}                    
  {2em}                         
  {\bfseries}                   
  {.}                           
  {0.5em}                       
  {\thmname{#1}\thmnumber{ #2} (\thmnote{\bfseries{#3}})}  
 
\theoremstyle{mydefinition}
\newtheorem{mydefinition}{Definition}[section]  
\newtheoremstyle{mytheorem}  
  {0}                        
  {0}                        
  {\itshape}                    
  {2em}                         
  {\bfseries}                   
  {.}                           
  {0.5em}                       
  {\thmname{#1}\thmnumber{ #2}}  
 
\theoremstyle{mytheorem}
\newtheorem{mytheorem}{Theorem}

\makeatletter
\renewenvironment{proof}[1][Proof]{%
  \par\pushQED{\qed}%
  \normalfont\topsep6\p@\@plus6\p@\relax
  \trivlist
  \item[\hskip\labelsep\hspace{2em}\itshape #1\@addpunct{.}]%
}{%
  \popQED\endtrivlist\@endpefalse
}
\makeatother
\def\tsc#1{\csdef{#1}{\textsc{\lowercase{#1}}\xspace}}
\tsc{WGM}
\tsc{QE}

\begin{document}
\let\WriteBookmarks\relax
\def\floatpagepagefraction{1}
\def\textpagefraction{.001}
\let\printorcid\relax 

\shorttitle{}    

\shortauthors{Yi Liang et al.}

\title[mode = title]{LTRAS: A Linkable Threshold Ring Adaptor Signature Scheme for Efficient and Private Cross-Chain Transactions}  

\author[1]{Yi Liang}
\author[1,2]{Jinguang Han}
\cormark[1]

\address[1]{School of Cyber Science and Engineering, Southeast University, Nanjing 211189, China} 
\address[2]{State Key Laboratory of Internet Architecture, Tsinghua University, Beijing, China, 100084} 
\ead{jghan@seu.edu.cn}
\cortext[1]{Corresponding author}  

\begin{abstract}
Despite the advantages of decentralization and immutability, blockchain technology faces significant scalability and throughput limitations, which has prompted the exploration of off-chain solutions like payment channels. Adaptor signatures have been considered a promising primitive for constructing such channels due to their support for atomicity, offering an alternative to traditional hash-timelock contracts. However, standard adaptor signatures may reveal signer identity, raising potential privacy concerns. While ring signatures can mitigate this issue by providing anonymity, they often introduce high communication overhead, particularly in multi-account payment settings commonly used in UTXO-based blockchains like Monero. To address these limitations, we propose a Linkable Threshold Ring Adaptor Signature (LTRAS) scheme, which integrates the conditional binding of adaptor signatures, the multi-account payment of threshold ring signatures, and the linkability for preventing double-spending. The formal definition, security model and concrete construction of LTRAS are provided. We also analyze its security and evaluate its performance through theoretical analysis and experimental implementation. Experimental results demonstrate that our scheme achieve significantly lower computation and communication overhead compared to existing schemes in large ring sizes and multi-account payment scenarios.
Finally, we discuss its application in cross-chain atomic swaps, demonstrating its potential for enhancing privacy and efficiency in blockchain transactions.
\end{abstract}



\begin{keywords}
Adaptor signature \sep 
Ring signature \sep 
Blockchain privacy \sep
Cross-chain atomic swaps \sep
Multi-account payment
\end{keywords}

\maketitle

\section{Introduction}
\label{sec:introduction}
Blockchain technology, as a decentralized distributed ledger system, has demonstrated significant potential in recent years across domains 
such as cryptocurrency \cite{yuan2018blockchain}, healthcare \cite{mcghin2019blockchain}, and supply chain \cite{kawaguchi2019application}. However, as blockchain technology evolves and its use cases expand, poor scalability and low throughput have emerged as critical bottlenecks. Off-chain payment channels are a key solution to this, enhancing overall transaction efficiency by alleviating the main blockchain's load through processing transactions off-chain. Nevertheless, the complex transaction scripting languages used in traditional 
payment channel schemes introduce new problems, not only increasing transaction fees for on-chain operations but also causing incompatibility 
with blockchains that have script limitations, thereby restricting broader adoption.

To address these challenges, adaptor signature  schemes \cite{poelstra2017scriptless,fournier2019one,aumayr2021generalized} have been proposed. This approach extends standard digital signatures by incorporating a hard mathematical relation (e.g., the discrete logarithm problem). A pre-signatureis generated with respect to this problem. Crucially, it can only be transformed into a valid full signatureby a party who knows a secret witnessto the relation, a property termed adaptability. Furthermore, the scheme guarantees witness extractability: anyone who sees both the pre-signature and the corresponding full signature can compute the witness. In a typical application, two parties exchange pre-signatures. Once the party holding the witness uses it to adapt one pre-signature, the other party can extract that witness and, in turn, adapt their own, ensuring a fair exchange. This conditional binding mechanism makes adaptor signatures suitable for scenarios 
where atomicity (or fairness) guarantees is required. Moreover, compared to traditional hash time-lock schemes, adaptor signatures rely solely on 
digital signature scripts that blockchain platforms inherently support, eliminating the need for complex hash or time-lock mechanisms, 
thereby significantly improving compatibility and efficiency. Currently, adaptor signatures are widely applied in numerous blockchain scenarios, 
including payment channel networks \cite{wang2024anonymity} and cross-chain atomic swaps \cite{fan2025attribute,yin2023survey}.

However, adaptor signatures do not address privacy issues: during signature generation and verification, 
the signer's account address is directly exposed on the blockchain. Attackers can analyze on-chain data to construct transaction graphs, 
inferring user identities, fund flows, and even business behavior patterns, leading to fraud or malicious marketing, thereby threatening 
asset security and leaking business secrets. Therefore, privacy-preserving adaptor signatures must be considered.

Ring signatures are privacy-preserving cryptographic schemes \cite{yuen2020ringct, yuen2021dualring,duan2024concise}. To address  privacy issues in Blockchain, Monero introduced ring signatures to hide the signer's public key within a "ring" containing other users' public keys, achieving anonymization of the signer's wallet address. 
 Linkable ring signatures further enhance security by allowing detection of whether the same signer used the ring in different 
 transactions without revealing identity, effectively preventing abuses such as double-spending. Schemes extending ring signatures to ring adaptor 
 signatures can resolve the privacy leakage issues of adaptor signatures, and ring adaptor signatures enables cross-chain atomic swap 
 protocols to be compatible with blockchains like Monero that utilize ring signatures. In transaction scenarios of cryptocurrencies like Monero, 
 the accounting model used is UTXO (Unspent Transaction Output), where users typically have multiple one-time public keys (i.e., account addresses), 
 and payments often require multiple accounts as inputs. If standard ring signatures are employed to sign transactions, each payment account must 
 be hidden in a separate ring, leading to a dramatic increase in signature size and higher communication and storage overhead. Compared to standard 
 ring signatures, threshold ring signatures \cite{duan2024concise}, many-out-of-many proofs \cite{diamond2021many} and 
 any-out-of-many proofs \cite{zheng2023leaking} can hide multiple accounts within the same ring, significantly reducing signature size, 
 thus offering advantages in multi-account joint payment scenarios. However, existing schemes have not addressed how to integrate threshold ring signatures 
 and linkability within the adaptor signature framework to meet the demands of multi-account joint payments in payment channel networks and 
 cross-chain atomic swaps, particularly in contexts requiring simultaneous privacy, efficiency, and atomicity. To achieve efficient and privacy-preserving blockchain applications, it is crucial to develop a linkable threshold ring adaptor signature scheme. This scheme aims to amalgamate the conditional binding of adaptor signatures, the multi-account efficiency inherent in threshold ring signatures, and the security features offered by linkable ring signatures.

\subsection{Our Contributions}
\label{sec:contributions}
In this paper, we propose a linkable threshold ring adaptor 
signature scheme (LTRAS). The characteristics of the scheme are as follows:
\begin{itemize}
	\item \textbf{Adaptability and Extractability}: The adaptability property ensures that a pre-signature can be converted into a full signature only by a user in possession of the correct witness. Conversely, the extractability property guarantees that the witness for the underlying hard relation can be derived by anyone who obtains both the pre-signature and the corresponding full signature.
	\item \textbf{Anonymity}: This property ensures that the signer's actual public key is hidden among a set of other public keys (the ring). A verifier of the signature cannot discern which specific public key in the ring corresponds to the genuine signer, thus guaranteeing the signer's anonymity.
	\item \textbf{Support for Joint Payments}: A signer can conceal multiple public keys within the same ring and subsequently complete a transfer transaction with a single signing operation. This is particularly applicable to joint payment scenarios involving multiple accounts in blockchain transactions.
	\item \textbf{Linkability}: The core functionality of linkability is to enable the detection of signatures generated by the same private key. This allows anyone to determine that different transactions were signed by the same entity, thereby effectively preventing double-spending attacks.
\end{itemize}

We also provide a formal definition, concrete construction, security models, and security proofs for the proposed scheme. 
Furthermore, we evaluate the performance of our scheme through theoretical analysis and experimental implementation, demonstrating its efficiency and practicality. 
Finally, we discuss its applications in cross-chain atomic swaps, showcasing its potential for real-world blockchain scenarios.

\subsection{Organization}
\label{sec:organization}
The paper proceeds in the following sequence. We first introduce the related work and preliminaries in Sections \ref{sec:relatedwork} and \ref{sec:preliminaries}, respectively. Section \ref{sec:LTRAS} subsequently presents our LTRAS scheme, including its formal definition, construction, security models, and proof. The performance of the scheme is then evaluated in Section \ref{sec:performance}. Furthermore, Section \ref{sec:application} discusses its potential applications in cross-chain atomic swaps. A conclusion is provided in Section \ref{sec:conclusion}.

\section{Related Work}
\label{sec:relatedwork}
The notion of adaptor signatures was initially introduced by Poelstra in 2017 \cite{poelstra2017scriptless}, though it lacked a formalized definition and security framework. Aiming to address this, Fournier \cite{fournier2019one} later conceptualized them in 2019 as one-time verifiably encrypted signatures (VES), building upon the VES concept by Boneh et al. \cite{boneh2003aggregate} and establishing a corresponding security model. A significant formalization advance was made by Aumayr et al. in 2021 \cite{aumayr2021generalized}, who elevated adaptor signatures to a standalone cryptographic primitive. They provided a rigorous definition and introduced the now-prevalent security model. Furthermore, we construct concrete schemes based on Schnorr and ECDSA signatures, two signature schemes widely used in blockchain systems, and provide corresponding security proofs, they laid a crucial theoretical groundwork for the broader application of adaptor signatures in blockchain technology.
Subsequently, threshold, \cite{wang2025tass, 
baecker2025fair}, identity-based \cite{zhu2023two,bao2023identity}, 
post-quantum \cite{tairi2021post, jana2024compact}, 
and multi-party \cite{kajita2024consecutive} adaptor signature schemes emerged. 
Further developments include the work of Liu et al. \cite{liu2024adaptor}, who noted that witness exposure limits applications and thus built witness-hiding adaptor signatures from digital signatures and a weak trapdoor commitment. Gerhart et al. \cite{gerhart2024foundations} highlighted vulnerabilities in the established security model by Aumayr et al. \cite{aumayr2021generalized}, proposing corrected definitions that ensure adaptability and witness extraction for all valid pre-signatures. Vanjani et al. \cite{vanjani2024functional} pointed out the lack of granularity in standard adaptor signatures and introduced functional adaptor signatures, wherein a function of the witness is extracted instead of the full witness. This scheme enables finer-grained access 
control to sensitive data while ensuring transaction fairness.

None of the aforementioned research work on adaptor signatures 
has addressed the privacy leakage issues inherent in adaptor 
signatures. In 2023, Qin et al. \cite{qin2023blindhub} constructed an ECDSA blind 
adaptor signature scheme, but this scheme only provides weak 
blindness. Although the signer cannot learn the specific content 
of the signed message, once the message-signature pair is 
published, the signer can link the signing process to the final 
signature. 
Moreover, blind adaptor signature schemes only protect the 
privacy of the signed message but not the privacy of the signer's 
identity. 

In 2020, Yuen et al. \cite{yuen2020ringct} proposed a linkable ring 
signature that combines inner product arguments to reduce signature 
size. Building upon this, Qin et al. \cite{qin2021generic} subsequently presented the first linkable ring adaptor signature scheme to protect signer's account address privacy and ensure anonymity. However, a key limitation of their scheme is that it can only hide a single account within a ring. Moreover, its signature size increases significantly with the number of payment accounts, making it impractical for scenarios involving multiple accounts. Separately, Sui et al. \cite{sui2022monet} devised a two-party sequential ring adaptor signature scheme, yet the lack of linkability prevents its application in typical cryptocurrency settings. 

Yuen et al. \cite{yuen2021dualring} introduced DualRing as a novel framework for constructing ring signatures, utilizing a sum argument protocol to achieve a logarithmic-scale signature size. Building upon this foundation, Wang et al. \cite{wang2024anonymity} presented a generic construction for linkable ring adaptor signatures. However, this scheme is also not suitable for multi-account scenarios. 

Duan et al. \cite{duan2024concise} hid multiple accounts within the same ring and 
utilized their proposed sliding window transformation protocol 
to construct a linkable threshold ring signature scheme that 
supports multi-account joint payment scenarios. 
Zheng et al. \cite{zheng2023leaking} proposed any-out-of-many 
proofs and designed a novel ring signature scheme capable 
of hiding multiple payment accounts within the same ring 
without revealing the exact number of accounts, thereby 
achieving stronger anonymity. However, both \cite{duan2024concise} 
and \cite{zheng2023leaking} only focus on ring signatures and do not address adaptor signatures. Liu et al. \cite{liu2021aucswap} presented AucSwap, a decentralized protocol for cross-chain asset transfer that employs an auction model. It employs a combination of atomic swap and an optimized Vickrey auction for efficient settlement. However, AucSwap relies on hash-timelock contracts for atomicity and cannot provide the privacy-preserving conditional binding capability critical for cross-chain atomic swaps. 
Scala et al. \cite{Scala2024zeromt} proposed ZeroMT, which optimizes intra-chain multi-transfer privacy using aggregated zero-knowledge proofs but is unable to support the conditional release of assets that is critical for cross-chain atomic swaps. Liu et al. \cite{LIU2025latticeObCPay} introduced a conditional payment scheme that is quantum-resistant and oracle-based, employing adaptor-inspired logic to ensure one-wayness and verifiability. This scheme remains confined to conditional payments initiated by oracles and offers no cross-chain support. 

\begin{table*}[h!]
\centering
\caption{Comparison of existing adaptor signature and ring signature schemes}
\label{tab:relatedwork_comparison}
\begin{tabular}{lccccc}
\toprule
\multirow{2}{*}{Schemes} & \multicolumn{5}{c}{Properties} \\
\cmidrule{2-6}
& Adaptability & Witness Extractability & Anonymity & Linkability & Joint Payments \\
\midrule
\cite{aumayr2021generalized}   & \cmark & \cmark & \xmark & \xmark & \xmark \\
\cite{vanjani2024functional}   & \cmark & \cmark & \xmark & \xmark & \xmark \\
\cite{qin2021generic}   & \cmark & \cmark & \cmark & \cmark & \xmark \\
\cite{sui2022monet}   & \cmark & \cmark & \cmark & \xmark & \xmark \\
\cite{wang2024anonymity}   & \cmark & \cmark & \cmark & \cmark & \xmark \\
\cite{duan2024concise}   & \xmark & \xmark & \cmark & \cmark & \cmark \\
\cite{zheng2023leaking}   & \xmark & \xmark & \cmark & \cmark & \cmark \\
Ours   & \cmark & \cmark  & \cmark  & \cmark  & \cmark \\
\bottomrule
\end{tabular}
\end{table*}

able \ref{tab:relatedwork_comparison} presents a comparative overview of the key characteristics between our approach and existing ones, providing a clear and systematic analysis.

\section{Preliminaries}
\label{sec:preliminaries}
\subsection{Notations}
For clarity, a summary of the notations employed in this work is compiled in Table \ref{tab:symbols} for reference.

\begin{table*}[h!]
\centering
\caption{Description of Symbols}
\label{tab:symbols}
\begin{tabular}{>{\raggedright\arraybackslash}p{3cm}>{\raggedright\arraybackslash}p{9cm}}
\toprule
\textbf{Symbol} & \textbf{Description} \\
\midrule
$\lambda$ & Security parameter \\
$\mathbb{G}_{p}$ & Commutative group with the order $p$ \\
$\mathbb{Z}_{p}$ & The finite field of integers modulo $p$ \\
$sk, pk$ & Secret key and public key \\
$\boldsymbol{PK}$ & A set with $n$ public keys $\{pk_0, \dots, pk_{n-1}\}$ \\
$|\boldsymbol{PK}|$ & The number of elements in the set $\boldsymbol{PK}$, i.e., $n$ \\
$H$ & A collision-resistant hash function which outputs an element in $\mathbb{Z}_{p}$ \\
$H(\boldsymbol{PK})$ & Hash function applied to $\boldsymbol{PK}$, i.e., $H(pk_0, \dots, pk_{n-1})$ \\
$[0,n-1]$ & A set of integers $\{0,1,\dots,n-1\}$ \\
$\mathcal{G}$ & A probabilistic polynomial-time algorithm that generates public parameters \\
$\epsilon$ & A negligible function \\
\bottomrule
\end{tabular}
\end{table*}

\subsection{Discrete Logarithm Assumption}
\begin{mydefinition}[Discrete Logarithm Assumption]
Under the discrete logarithm (DL) assumption, for a group $\mathbb{G}_{p}$ of prime order $p$ with generator $g$, it is computationally intractable for any probabilistic polynomial-time (PPT) adversary $\mathcal{A}$ to retrieve the exponent $x \in \mathbb{Z}_{p}$ from a given group element $y = g^x \in \mathbb{G}_{p}$. Formally, we define the advantage of $\mathcal{A}$ as:
\[
\text{Adv}_{\mathcal{A}}^{DL}(\lambda) = \Pr\left[\mathcal{A}(g, y) = x \mid y = g^x, x \in \mathbb{Z}_{p}\right].
\]
The assumption is considered to hold if $\text{Adv}_{\mathcal{A}}^{DL}(\lambda)\le \epsilon(\lambda)$ in the security parameter $\lambda$ for all PPT adversaries $\mathcal{A}$.
\end{mydefinition}

\subsection{Hard Relation}
A relation $\mathcal{R}$ is defined over statement-witness pairs $(W, w)$. Its language is $\mathcal{L}_{\mathcal{R}} = \{ W \mid \exists w : (W, w) \in \mathcal{R} \}$.
We call $\mathcal{R}$ a hard relation \cite{erwig2021two} if it satisfies three properties:
\begin{itemize}
	\item \textbf{Efficient Sampling:} There exists a probabilistic polynomial-time algorithm $\mathbf{GenR}(1^\lambda)$ that outputs a pair $(W,w) \in \mathcal{R}$.
	\item \textbf{Efficient Verification:} Membership in $\mathcal{R}$ is decidable in polynomial time.
	\item \textbf{Computational Hardness:} For any PPT algorithm $\mathcal{A}$, given $W \in \mathcal{L}_{\mathcal{R}}$ (where $(W,w) \leftarrow \mathbf{GenR}(1^\lambda)$), the probability that $\mathcal{A}$ outputs a valid witness $w'$ such that $(W, w') \in \mathcal{R}$ is negligible in $\lambda$.
\end{itemize}

\subsection{Digital Signature}
\label{sec:digital_signature}
A digital signature scheme $\Sigma$ comprises the following probabilistic algorithms:
\begin{itemize}
	\item $\mathbf{Setup}(1^\lambda) \rightarrow pp$: This algorithm takes the security parameter $\lambda$ and returns the system's public parameters $pp$.
	\item $\mathbf{KeyGen}(pp) \rightarrow (pk,sk)$: On input $pp$, it generates a pair consisting of a public verification key $pk$ and a private signing key $sk$.
	\item $\mathbf{Sign}(pp,sk,m) \rightarrow \sigma$: For a given message $m$, it uses the secret key $sk$ and parameters $pp$ to produce a signature $\sigma$.
	\item $\mathbf{Verify}(pp,pk,m,\sigma) \rightarrow 0/1$: This deterministic algorithm checks the validity of signature $\sigma$ for message $m$ under public key $pk$, outputting 1 if valid and 0 otherwise.
\end{itemize}
A secure signature scheme $\Sigma$ guarantees:
\begin{itemize}
	\item \textbf{Completeness}: For all key pairs $(pk, sk)$ generated by $\mathbf{KeyGen}(pp)$, every signature $\sigma \leftarrow \mathbf{Sign}(pp,sk,m)$ satisfies $\mathbf{Verify}(pp,pk,m,\sigma) = 1$ with overwhelming probability.
	\item \textbf{Unforgeability}: It must be infeasible for any polynomial-time adversary $\mathcal{A}$ who does not possess the secret key $sk$ to produce a valid signature on any (potentially new) message.
\end{itemize}

\subsection{Adaptor Signature}
An adaptor signature scheme associated with a hard relation $\mathcal{R}$ and a standard digital signature scheme $\Sigma= \left(\mathbf{Setup}, \mathbf{KeyGen}, \mathbf{Sign}, \mathbf{Verify}\right)$ comprises a suite of four algorithms, denoted as $\Pi_{\mathcal{R},\Sigma}=\left(\mathbf{PreSign},\mathbf{PreVerify}, \right.$ $\left.\\ \mathbf{Adapt},\mathbf{Ext}\right)$. As detailed in Section \ref{sec:digital_signature}, $\Sigma$ consists of four algorithms, briefly recalled below: $\mathbf{Setup}(1^\lambda) \rightarrow pp$, $\mathbf{KeyGen}(pp) \rightarrow (pk,sk)$, $\mathbf{Sign}(pp,sk,m) \rightarrow \sigma$, and $\mathbf{Verify}(pp,pk,m,\sigma) \rightarrow 0/1$.
The algorithms defining $\Pi_{\mathcal{R},\Sigma}$ are as follows:
\begin{itemize}
	\item $\mathbf{PreSign}(pp,pk,sk,m,W) \rightarrow \tilde{\sigma}$: This algorithm accepts the public parameters $pp$, a public key $pk$, the corresponding secret key $sk$, a message $m$, and a statement $W$ belonging to the hard relation. It yields a pre-signature $\tilde{\sigma}$.
	\item $\mathbf{PreVerify}(pp,pk,\tilde{\sigma},m,W) \rightarrow 0/1$: On input $pp$, $pk$, a pre-signature $\tilde{\sigma}$, a message $m$, and a statement $W$, this deterministic algorithm returns 1 if $\tilde{\sigma}$ is valid, and 0 otherwise.
	\item $\mathbf{Adapt}(pp,\tilde{\sigma},w) \rightarrow \sigma$: Taking the public parameters $pp$, a pre-signature $\tilde{\sigma}$, and a witness $w$, this algorithm produces a full signature $\sigma$.
	\item $\mathbf{Ext}(pp,W,\tilde{\sigma},\sigma) \rightarrow w/\bot$: Given $pp$, a statement $W$, a pre-signature $\tilde{\sigma}$, and a full signature $\sigma$, this algorithm outputs the corresponding witness $w$ if $(W,w) \in \mathcal{R}$; otherwise, it returns $\bot$.
\end{itemize}

An adaptor signature scheme $\Pi_{\mathcal{R},\Sigma}$ is defined to satisfy the following four properties:
\begin{itemize}
	\item \textbf{Correctness}: For any security parameter $\lambda \in \mathbb{N}$, any message $m \in \{0,1\}^*$, and any valid statement/witness pair $(W,w) \in \mathcal{R}$, the following condition holds:
	\[
	\text{Pr}\left[ \begin{array}{c}
		\mathbf{PreVerify}(pp,\\ pk,\tilde{\sigma}, m,W)=1
		\\ \land \mathbf{Verify} (pp, pk,\\ \sigma,m)=1
		\\ \land (W,w') \in \mathcal{R}
	\end{array}
	\left| \begin{array}{c}
		pp \leftarrow \mathbf{Setup}(1^\lambda),\\
		(pk,sk) \leftarrow \mathbf{KeyGen}(pp),\\
		\tilde{\sigma} \leftarrow \mathbf{PreSign}(pp,pk,\\ sk,m, W),
		\sigma \leftarrow \mathbf{Adapt}\\(pp, \tilde{\sigma},w),
		w' \leftarrow \mathbf{Ext}\\ (pp, W,\tilde{\sigma},\sigma)
	\end{array} \right. \right] =1.
	\]
	\item \textbf{Pre-signature Adaptability}: For all $\lambda \in \mathbb{N}$, all messages $m \in \{0,1\}^*$, all bitstrings $\tilde{\sigma}$, and all pairs $(W,w) \in \mathcal{R}$, if $\tilde{\sigma}$ is a valid pre-signature, then adapting it with witness $w$ yields a standard signature that verifies. Formally:
	\[
	\text{Pr}\left[ \begin{array}{c}
    \mathbf{Verify}(pp, pk,\\ \sigma,m)=1
	\end{array}
	\left| \begin{array}{c}
		pp \leftarrow \mathbf{Setup}(1^\lambda),\\
		(pk,sk) \leftarrow  \mathbf{KeyGen}(pp),\\
		\mathbf{PreVerify}(pp,pk,\\ \tilde{\sigma},m,W)=1,\\
		\sigma \leftarrow \mathbf{Adapt}(pp,\tilde{\sigma},w)
	\end{array} \right. \right] =1.
	\]
	This guarantees an efficient conversion from a valid pre-signature to a full signature given the correct witness.

	\item \textbf{Witness Extractability}: This property requires that a witness can be extracted from any valid pre-signature and signature pair. No PPT adversary $\mathcal{A}$ can generate a convincing pair $(\tilde{\sigma}, \sigma)$ from which the extract algorithm $\mathbf{Ext}$ fails to recover a valid witness, except with probability negligible in $\lambda$.

	\item \textbf{Unforgeability}: The scheme resists signature forgery. Specifically, the success probability of any PPT adversary $\mathcal{A}$ in forging a valid signature $\sigma^*$ on a message $m^*$ without knowledge of the secret key $sk$ must be a negligible function in the security parameter $\lambda$.
\end{itemize}

\begin{algorithm}[h!]
	\caption{Experiment $\textbf{aWitExt}_{\mathcal{A},\Pi_{\mathcal{R},\Sigma}}$}
	\label{alg:WitExt}
	\Proc{$\mathbf{aWitExt}_{\mathcal{A},\Pi _{\mathcal{R},\Sigma}}(\lambda):$}{
	\parbox[t]{0.8\linewidth}{This security model is captured via a sequence of four games, wherein the challenger $\mathcal{C}$ interacts with the adversary $\mathcal{A}$.}\\
	\parbox[t]{0.8\linewidth}{$\mathsf{Setup}$. $\mathcal{C}$ runs $\mathbf{Setup}(1^\lambda) \to pp$, 
	initializes the signature query set $\boldsymbol{Q}=\emptyset$, 
	and returns $pp$ to $\mathcal{A}$. $\mathcal{A}$ runs $\mathbf{GenR}(pp) \to (\boldsymbol{W},w)$, 
	and returns $\boldsymbol{W}$ to $\mathcal{C}$.}\\
	\parbox[t]{0.8\linewidth}{$\mathsf{KeyGen}$. $\mathcal{C}$ runs $\mathbf{KeyGen}(pp) \to (pk_i,sk_i)$, where
	$i \in [0,n-1]$, sets $\boldsymbol{PK}=(pk_0,\dots,pk_{n-1})$, and returns 
	$\boldsymbol{PK}$ to $\mathcal{A}$.}\\
	\parbox[t]{0.8\linewidth}{$\mathsf{Query}$. The capabilities of the adversary $\mathcal{A}$ is modeled by 
	making queries to the following oracles:}\\
	\Proc{$\mathcal{SO}(\boldsymbol{PK},\boldsymbol{S},m):$}{
		\parbox[t]{0.75\linewidth}{On input a set of public keys $\boldsymbol{PK}$, 
		a set of indices $\boldsymbol{S}\subseteq [0,n-1]$, 
		and a message $m$, the oracle runs 
		$\mathbf{Sign}(pp,\boldsymbol{PK},\left\{ sk_i \right\} _{i\in \boldsymbol{S}},m) \rightarrow \sigma$, 
		sets $\boldsymbol{Q}=\boldsymbol{Q} \cup \left\{m \right\}$, 
		and returns the signature $\sigma$.}
	}
	\Proc{$\mathcal{PSO}(\boldsymbol{PK},\boldsymbol{S},m,\boldsymbol{W}):$}{
		\parbox[t]{0.75\linewidth}{On input a set of public keys $\boldsymbol{PK}$, 
		a set of indices $\boldsymbol{S}\subseteq [0,n-1]$, 
		a message $m$, and the the statement $\boldsymbol{W}$ from the hard relation, 
		the oracle runs $\mathbf{PreSign}(pp,$ $\boldsymbol{PK},\left\{ sk_i \right\} _{i\in \boldsymbol{S}},m,\boldsymbol{W}) \rightarrow \tilde{\sigma}$, 
		sets $\boldsymbol{Q}=$ $\boldsymbol{Q} \cup \left\{m \right\}$, 
		and returns the pre-signature $\tilde{\sigma}$.}
	}
	\parbox[t]{0.8\linewidth}{$\mathsf{Witness\ Extraction}$. $\mathcal{A}$ sends the target message $(\boldsymbol{S}^*,m^*,\boldsymbol{W}^*=(W_1^*,W_2^*))$ 
	to $\mathcal{C}$. The challenger $\mathcal{C}$ runs 
	$\mathbf{PreSign}(pp,\boldsymbol{PK},\left\{ sk_i \right\} _{i\in \boldsymbol{S}^*},$ $m^*,\boldsymbol{W}^*)\to \tilde{\sigma}^*$, 
	then sends $\tilde{\sigma}^*$ to $\mathcal{A}$. 
	$\mathcal{A}$ generates a signature $\sigma^*$ of $m^*$, and sends $\sigma^*$ to $\mathcal{C}$. 
	$\mathcal{C}$ runs $\mathbf{Ext}(pp,\boldsymbol{W^*},\tilde{\sigma}^*,\sigma^*)\to w^*$. The experiment outputs 1, indicating $\mathcal{A}$'s win, provided that:}
	\begin{itemize}
		\item $m^* \notin \boldsymbol{Q}$,
		\item $\mathbf{Verify}(pp,\boldsymbol{PK},\sigma^*,t,m^*)=1$,
		\item $(W_1^*,w^*)\notin \mathcal{R} \lor (W_2^*,w^*)\notin \mathcal{R}$.
	\end{itemize}
	\parbox[t]{0.8\linewidth}{In all other cases, the output is 0.}
	}
\end{algorithm}

\begin{algorithm}[h!]
	\caption{Experiment $\textbf{aSignForge}_{\mathcal{A},\Pi_{\mathcal{R},\Sigma}}$}
	\label{alg:SignForge}
	\Proc{$\mathbf{aSignForge}_{\mathcal{A},\Pi _{\mathcal{R},\Sigma}}(\lambda):$}{
	\parbox[t]{0.8\linewidth}{This security model is captured via a sequence of four games, wherein the challenger $\mathcal{C}$ interacts with the adversary $\mathcal{A}$.}\\
	\parbox[t]{0.8\linewidth}{$\mathsf{Setup}$. $\mathcal{C}$ runs $\mathbf{Setup}(1^\lambda) \to pp$, 
	$\mathbf{GenR}(pp) \to (\boldsymbol{W},w)$, initializes the corrupted party set 
	$\boldsymbol{F}=\emptyset$ and the signature query set $\boldsymbol{Q}=\emptyset$, 
	and returns $pp$ and $\boldsymbol{W}$ to $\mathcal{A}$.}\\
	\parbox[t]{0.8\linewidth}{$\mathsf{KeyGen}$. $\mathcal{C}$ runs $\mathbf{KeyGen}(pp) \to (pk_i,sk_i)$, where
	$i \in [1,q_k]$, sets $\boldsymbol{P}=(pk_1,pk_2,\dots,pk_{q_k})$, and returns 
	$\boldsymbol{P}$ to $\mathcal{A}$.}\\
	\parbox[t]{0.8\linewidth}{$\mathsf{Query}$. The capabilities of the adversary $\mathcal{A}$ is modeled by 
	making queries to the following oracles:}\\
	\Proc{$\mathcal{CO}(i):$}{
		\parbox[t]{0.75\linewidth}{On input an index $i \in [1,q_k]$, the oracle sets 
		$\boldsymbol{F}=\boldsymbol{F} \cup \left\{ pk_i \right\}$, 
		and returns the corresponding secret key $sk_i$.}
	}
	\Proc{$\mathcal{SO}(\boldsymbol{PK},\boldsymbol{S},m):$}{
		\parbox[t]{0.75\linewidth}{On input a set of public keys $\boldsymbol{PK}\subseteq \boldsymbol{P}$, 
		a set of indices $\boldsymbol{S}\subseteq [0,n-1]$, 
		and a message $m$, where $|\boldsymbol{PK}|=n$, the oracle runs 
		$\mathbf{Sign}(pp,\boldsymbol{PK},\left\{ sk_i \right\} _{i\in \boldsymbol{S}},m) 
		\rightarrow \sigma$, sets $\boldsymbol{Q}=\boldsymbol{Q} \cup \left\{(\boldsymbol{PK},\boldsymbol{S},m )\right\}$, 
		and returns the signature $\sigma$.}
	}
	\Proc{$\mathcal{PSO}(\boldsymbol{PK},\boldsymbol{S},m,\boldsymbol{W}):$}{
		\parbox[t]{0.75\linewidth}{On input a set of public keys $\boldsymbol{PK}\subseteq \boldsymbol{P}$, 
		a set of indices $\boldsymbol{S}\subseteq [0,n-1]$, 
		a message $m$, where $|\boldsymbol{PK}|=n$, and the the statement $\boldsymbol{W}$ from the hard relation, 
		the oracle runs $\mathbf{PreSign}(pp,\boldsymbol{PK},
		\left\{ sk_i \right\} _{i\in \boldsymbol{S}},m,\boldsymbol{W}) 
		\rightarrow \tilde{\sigma}$, sets $\boldsymbol{Q}=\boldsymbol{Q} \cup 
		\left\{(\boldsymbol{PK},\boldsymbol{S},m )\right\}$, 
		and returns the pre-signature $\tilde{\sigma}$.}
	}
	\parbox[t]{0.8\linewidth}{$\mathsf{Forgery}$. The adversary $\mathcal{A}$ sends the target message $(\boldsymbol{PK}^*,\boldsymbol{S}^*,m^*)$ to
	$\mathcal{C}$. The challenger $\mathcal{C}$ runs 
	$\mathbf{PreSign}(pp,\boldsymbol{PK}^*,\left\{ sk_i \right\} _{i\in \boldsymbol{S}^*},m^*,\boldsymbol{W})\to \tilde{\sigma}^*$, 
	then sends $\tilde{\sigma}^*$ to $\mathcal{A}$. 
	If $\mathcal{A}$ successfully outputs $(\boldsymbol{PK}^*,\sigma^*,m^*)$ 
	with the following conditions:}
	\begin{itemize}
		\item $\boldsymbol{PK}^*\subseteq \boldsymbol{P}$ and 
		$\left| \boldsymbol{PK}^* \right|=n$,
		\item $|\boldsymbol{PK}^*\cap \boldsymbol{F}|\le t-1$,
		\item $(\boldsymbol{PK}^*,*,m^*) \notin \boldsymbol{Q}$,
		\item $\mathbf{Verify}(pp,\boldsymbol{PK}^*,\sigma^*,t,m^*)=1$.
	\end{itemize}
	\parbox[t]{0.8\linewidth}{When these hold, the experiment returns 1; otherwise, it returns 0.}
	}
\end{algorithm}

\subsection{Sliding Window Transformation}
The sliding window transformation (SWT) \cite{duan2024concise} is designed to transform a standard ring signature scheme into a $(t,n)$-threshold variant within a homomorphic cryptosystem.

Let $\boldsymbol{PK}=\left\{pk_0,pk_1,\dots, pk_{n-1} \right\}$ be a set of $n$ public keys, and let $t$ be the specified threshold. The SWT procedure is performed as follows:
\begin{itemize}
	\item First, compute $d=H(\boldsymbol{PK})$, where $H$ is a hash function mapping the key set to a fixed-size digest. The inclusion of $d$ serves to mitigate rogue-key attacks.
	\item Then, for each index $i\in [0,n-1]$, calculate the aggregate of $t$ consecutive public keys beginning with $pk_i$, wrapping around the set as needed. Formally, this is:
	\[
		y_i = \prod_{k=i (mod\ n)}^{(i+t-1) (mod\ n)} pk_{k}^d.
	\]
	\item The output is the transformed set of public keys, denoted as $\boldsymbol{PK'}=\left\{y_0,y_1,\dots,y_{n-1}\right\}$.
\end{itemize}

\section{The Linkable Threshold Ring Adaptor Signature Scheme}
\label{sec:LTRAS}
\subsection{Formal Definition}
\begin{mydefinition}[Linkable $(t,n)$-Threshold Ring Adaptor Signature Scheme]
	A linkable $(t,n)$-threshold ring adaptor signature (LTRAS) scheme $\Pi _{\mathcal{R},\Sigma}=\left(
	\mathbf{Setup},\right.$ $\left.\\\mathbf{KeyGen}, \mathbf{GenR},\mathbf{PreSign},\mathbf{PreVerify},
	 \mathbf{Adapt},\mathbf{Verify},
	 \mathbf{Ext},\right.$ $\left.\\ \mathbf{Link}\right)$ is comprised of a tuple of algorithms operating relative to a hard relation $\mathcal{R}$ and a standard signature scheme $\Sigma$. The individual algorithms are specified as follows.
\end{mydefinition}
\begin{itemize}
	\item $\mathbf{Setup}(1^\lambda) \rightarrow pp$: This algorithm accepts the security parameter $\lambda$ and returns the system's public parameters $pp$.
	\item $\mathbf{KeyGen}(pp) \rightarrow (pk,sk)$: On input of the public parameters $pp$, the key generation algorithm produces a public/secret key pair $(pk, sk)$.
	\item $\mathbf{GenR}(pp) \rightarrow (\boldsymbol{W},w)$: Taking $pp$ as input, this algorithm samples two statement/witness pairs $(W_1,w)$ and $(W_2,w)$ from the hard relation $\mathcal{R}$. It outputs the combined statement $\boldsymbol{W}=(W_1,W_2)$ along with the common witness $w$.
	\item $\mathbf{PreSign}(pp,\boldsymbol{PK},\boldsymbol{SK},m,\boldsymbol{W}) 
	\rightarrow \tilde{\sigma}$: This algorithm accepts the public parameters $pp$, a set of public keys $\boldsymbol{PK}$, a set of secret keys
	$\boldsymbol{SK}$, message $m$, and the statement $\boldsymbol{W}$ from the hard
	relation, where $|\boldsymbol{PK}|=n$, and $|\boldsymbol{SK}|\ge t$, and outputs a pre-signature $\tilde{\sigma}$.
	\item $\mathbf{PreVerify}(pp,\boldsymbol{PK},\tilde{\sigma},t,m,\boldsymbol{W}) 
	\rightarrow 0/1$: On input $pp$, $\boldsymbol{PK}$, a pre-signature $\tilde{\sigma}$, 
	threshold $t$, a message $m$, and the statement $\boldsymbol{W}$ from the hard relation, 
	this deterministic algorithm returns 1 if $\tilde{\sigma}$ is valid, and 0 otherwise.
	\item $\mathbf{Adapt}(pp,\tilde{\sigma},w) \rightarrow \sigma$: Taking the public parameters $pp$, a pre-signature $\tilde{\sigma}$, and a witness $w$,
	this algorithm produces a full signature $\sigma$.
	\item $\mathbf{Verify}(pp,\boldsymbol{PK},\sigma,t,m) \rightarrow 0/1$: On input $pp$, $\boldsymbol{PK}$, a signature $\sigma$, the threshold $t$, and the message $m$, this deterministic algorithm returns 1 if $\sigma$ is valid, and 0 otherwise.
	\item $\mathbf{Ext}(pp,\boldsymbol{W},\tilde{\sigma},\sigma) \rightarrow w/\bot$: Given $pp$, the statement $\boldsymbol{W}=(W_1,W_2)$, the pre-signature $\tilde{\sigma}$, and the full signature $\sigma$, this algorithm outputs the corresponding witness $w$ if $(W_1,w) \in \mathcal{R}$ and $(W_2,w) \in \mathcal{R}$; otherwise, it returns $\bot$.
	\item $\mathbf{Link}(\sigma ', \sigma '') \rightarrow 0/1$: This algorithm determines whether two given signatures $\sigma '$ and $\sigma ''$ originate from the same signer. It returns 1 upon confirmation of linkage, and 0 otherwise.
\end{itemize}

\begin{mydefinition}[Correctness]
\label{def:correctness}
	We say that a linkable $(t,n)$-threshold ring adaptor signature scheme $\Pi_{\mathcal{R},\Sigma}$ exhibits correctness whenever the following is true for all security parameters $\lambda \in \mathbb{N}$, all messages $m \in \{0,1\}^*$, and all thresholds $t \in [1,n]$:
\end{mydefinition}
\[
	\text{Pr}\left[ \begin{array}{c}
		\mathbf{PreVerify}(pp,\\ \boldsymbol{PK}, \tilde{\sigma},t,m,\boldsymbol{W})\\=1
		\land \mathbf{Verify}(pp,\\ \boldsymbol{PK}, \sigma,t,m)=1\\
		\land (W_1,w') \in \mathcal{R} \\ \land (W_2,w')\in \mathcal{R}
	\end{array}
	\left| \begin{array}{c}
		pp \leftarrow \mathbf{Setup}(1^\lambda),\\
		(pk,sk) \leftarrow \mathbf{KeyGen}(pp),\\
		(\boldsymbol{W}=(W_1,W_2), w) \leftarrow\\ \mathbf{GenR}(pp),
		\tilde{\sigma}\leftarrow \mathbf{PreSign}\\(pp, \boldsymbol{PK},\boldsymbol{SK},m,\boldsymbol{W}),\\
		\sigma \leftarrow \mathbf{Adapt}(pp,\tilde{\sigma},w),\\
		w' \leftarrow \mathbf{Ext}(pp,\boldsymbol{W},\tilde{\sigma},\sigma)
	\end{array} \right. \right] =1.
\]

\begin{mydefinition}[Pre-signature Adaptability]
	We say that a linkable $(t,n)$-threshold ring adaptor signature scheme $\Pi_{\mathcal{R},\Sigma}$ exhibits pre-signature adaptability whenever the following is true for all security parameters $\lambda \in \mathbb{N}$, all messages $m \in \{0,1\}^*$, all thresholds $t \in [1,n]$, and all bitstrings $\tilde{\sigma}\in \{0,1\}^*$:
\end{mydefinition}
\[
	\text{Pr}\left[ \begin{array}{c}
    \mathbf{Verify}(pp,\boldsymbol{PK},\\ \sigma,t,m)=1
    \end{array}
    \left| \begin{array}{c}
		pp \leftarrow \mathbf{Setup}(1^\lambda),\\
		(pk,sk) \leftarrow \mathbf{KeyGen}(pp),\\
		(\boldsymbol{W},w) \leftarrow \mathbf{GenR}(pp),\\
		\mathbf{PreVerify}(pp,\boldsymbol{PK},\\ \tilde{\sigma},t,m,\boldsymbol{W})=1,\\
		\sigma \leftarrow \mathbf{Adapt}(pp,\tilde{\sigma},w)
	\end{array} \right. \right] =1.
\]

\begin{algorithm}[h!]
	\caption{Experiment $\textbf{aAnon}_{\mathcal{A},\Pi_{\mathcal{R},\Sigma}}$}
	\label{alg:Anon}
	\Proc{$\mathbf{aAnon}_{\mathcal{A},\Pi _{\mathcal{R},\Sigma}}(\lambda):$}{
	\parbox[t]{0.8\linewidth}{This security model is captured via a sequence of five games, wherein the challenger $\mathcal{C}$ interacts with the adversary $\mathcal{A}$.}\\
	\parbox[t]{0.8\linewidth}{$\mathsf{Setup}$. $\mathcal{C}$ runs $\mathbf{Setup}(1^\lambda) \to pp$, 
	$\mathbf{GenR}(pp) \to (\boldsymbol{W},w)$, initializes the signature query set $\boldsymbol{Q}=\emptyset$, 
	and returns $pp$ and $\boldsymbol{W}$ to $\mathcal{A}$.}\\
	\parbox[t]{0.8\linewidth}{$\mathsf{KeyGen}$. $\mathcal{C}$ runs $\mathbf{KeyGen}(pp) \to (pk_i,sk_i)$, where
	$i \in [1,q_k]$, sets $\boldsymbol{P}=(pk_1,pk_2,\dots,pk_{q_k})$, and returns 
	$\boldsymbol{P}$ to $\mathcal{A}$.}\\
	\parbox[t]{0.8\linewidth}{$\mathsf{Query}$. The capabilities of the adversary $\mathcal{A}$ is modeled by 
	making queries to the following oracles:}\\
	\Proc{$\mathcal{SO}(\boldsymbol{PK},\boldsymbol{S},m):$}{
		\parbox[t]{0.75\linewidth}{On input a set of public keys $\boldsymbol{PK}\subseteq \boldsymbol{P}$, 
		a set of indices $\boldsymbol{S}\subseteq [0,n-1]$, 
		and a message $m$, where $|\boldsymbol{PK}|=n$, the oracle runs 
		$\mathbf{Sign}(pp,\boldsymbol{PK},\left\{ sk_i \right\} _{i\in \boldsymbol{S}},m) 
		\rightarrow \sigma$, sets $\boldsymbol{Q}=\boldsymbol{Q} \cup 
		\left\{(\boldsymbol{PK},\boldsymbol{S},m )\right\}$, 
		and returns the signature $\sigma$.}
	}
	\Proc{$\mathcal{PSO}(\boldsymbol{PK},\boldsymbol{S},m,\boldsymbol{W}):$}{
		\parbox[t]{0.75\linewidth}{On input a set of public keys $\boldsymbol{PK}\subseteq \boldsymbol{P}$, 
		a set of indices $\boldsymbol{S}\subseteq [0,n-1]$, 
		a message $m$, where $|\boldsymbol{PK}|=n$, and the the statement $\boldsymbol{W}$ from the hard relation, 
		the oracle runs $\mathbf{PreSign}(pp,\boldsymbol{PK},
		\left\{ sk_i \right\} _{i\in \boldsymbol{S}},m,\boldsymbol{W}) 
		\rightarrow \tilde{\sigma}$, sets $\boldsymbol{Q}=\boldsymbol{Q} \cup \left\{(\boldsymbol{PK},\boldsymbol{S},m )\right\}$, 
		and returns the pre-signature $\tilde{\sigma}$.}
	}
	\parbox[t]{0.8\linewidth}{$\mathsf{Challenge}$. $\mathcal{A}$ sends the target message 
	$(\boldsymbol{PK}^*,\boldsymbol{S}_0^*,\boldsymbol{S}_1^*,m^*,\boldsymbol{W}^*=(W_1^*,W_2^*))$ 
	to $\mathcal{C}$, where $\boldsymbol{PK}^*\subseteq \boldsymbol{P}$, $|\boldsymbol{PK}^*|=n$, $|\boldsymbol{S}_0^*|=|\boldsymbol{S}_1^*|=t$, 
	$\boldsymbol{S}_0^*\ne \boldsymbol{S}_1^*$, $(*,\boldsymbol{S}_0^*,*) \notin \boldsymbol{Q}$, 
	and $(*,\boldsymbol{S}_1^*,*) \notin \boldsymbol{Q}$. The challenger $\mathcal{C}$ runs 
	$\mathbf{PreSign}(pp,\boldsymbol{PK}^*,$ $\left\{ sk_i \right\} _{i\in \boldsymbol{S}_b^*},m^*,\boldsymbol{W}^*)\to \tilde{\sigma}^*$, for a 
	randomly chosen bit $b\in \{0,1\}$, then sends $\tilde{\sigma}^*$ to $\mathcal{A}$.}\\
	\parbox[t]{0.8\linewidth}{$\mathsf{Output}$. $\mathcal{A}$ returns a bit $b' \in \{0,1\}$. The experiment outputs 1 if when $b'=b$ and outputs 0 otherwise.}\\
	}
\end{algorithm}

\begin{algorithm}[h!]
	\caption{Experiment $\textbf{aLink}_{\mathcal{A},\Pi_{\mathcal{R},\Sigma}}$}
	\label{alg:Link}
	\Proc{$\mathbf{aLink}_{\mathcal{A},\Pi _{\mathcal{R},\Sigma}}(\lambda):$}{
	\parbox[t]{0.8\linewidth}{This security model is captured via a sequence of four games, wherein the challenger $\mathcal{C}$ interacts with the adversary $\mathcal{A}$.}\\
	\parbox[t]{0.8\linewidth}{$\mathsf{Setup}$. $\mathcal{C}$ runs $\mathbf{Setup}(1^\lambda) \to pp$, 
	$\mathbf{GenR}(pp) \to (\boldsymbol{W},w)$, initializes the corrupted party set 
	$\boldsymbol{F}=\emptyset$ and the signature query set $\boldsymbol{Q}=\emptyset$, 
	and returns $pp$ and $\boldsymbol{W}$ to $\mathcal{A}$.}\\
	\parbox[t]{0.8\linewidth}{
	$\mathsf{KeyGen}$. $\mathcal{C}$ runs $\mathbf{KeyGen}(pp) \to (pk_i,sk_i)$, where
	$i \in [1,q_k]$, sets $\boldsymbol{P}=(pk_1,pk_2,\dots,pk_{q_k})$, and returns 
	$\boldsymbol{P}$ to $\mathcal{A}$.}\\
	\parbox[t]{0.8\linewidth}{
	$\mathsf{Query}$. The capabilities of the adversary $\mathcal{A}$ is modeled by 
	making queries to the following oracles:}\\
	\Proc{$\mathcal{CO}(i):$}{
		\parbox[t]{0.75\linewidth}{On input an index $i \in [1,q_k]$, the oracle sets 
		$\boldsymbol{F}=\boldsymbol{F} \cup \left\{ pk_i \right\}$, 
		and returns the corresponding secret key $sk_i$.}
	}
	\Proc{$\mathcal{SO}(\boldsymbol{PK},\boldsymbol{S},m):$}{
		\parbox[t]{0.75\linewidth}{On input a set of public keys $\boldsymbol{PK}\subseteq \boldsymbol{P}$, 
		a set of indices $\boldsymbol{S}\subseteq [0,n-1]$, 
		and a message $m$, where $|\boldsymbol{PK}|=n$, the oracle runs 
		$\mathbf{Sign}(pp,\boldsymbol{PK},\left\{ sk_i \right\} _{i\in \boldsymbol{S}},m) 
		\rightarrow \sigma$, sets $\boldsymbol{Q}=\boldsymbol{Q} \cup 
		\left\{(\boldsymbol{PK},\boldsymbol{S},m) \right\}$, 
		and returns the signature $\sigma$.}
	}
	\Proc{$\mathcal{PSO}(\boldsymbol{PK},\boldsymbol{S},m,\boldsymbol{W}):$}{
		\parbox[t]{0.75\linewidth}{On input a set of public keys $\boldsymbol{PK}\subseteq \boldsymbol{P}$, 
		a set of indices $\boldsymbol{S}\subseteq [0,n-1]$, 
		a message $m$, where $|\boldsymbol{PK}|=n$, and the the statement $\boldsymbol{W}$ from the hard relation, 
		the oracle runs $\mathbf{PreSign}(pp,\boldsymbol{PK},
		\left\{ sk_i \right\} _{i\in \boldsymbol{S}},m,\boldsymbol{W}) 
		\rightarrow \tilde{\sigma}$, sets $\boldsymbol{Q}=\boldsymbol{Q} \cup 
		\left\{(\boldsymbol{PK},\boldsymbol{S},m) \right\}$, 
		and returns the pre-signature $\tilde{\sigma}$.}
	}
	\parbox[t]{0.8\linewidth}{$\mathsf{Challenge}$. If $\mathcal{A}$ successfully outputs two signatures $(\boldsymbol{PK}_1^*,\sigma_1^*,m_1^*)$ 
	and $(\boldsymbol{PK}_2^*,\sigma_2^*,m_2^*)$ with the following conditions:}
	\begin{itemize}
		\item $\boldsymbol{PK}_i^*\subseteq \boldsymbol{P}$ and $\left| \boldsymbol{PK}_i^* \right|=n$, $i\in \{1,2\}$,
		\item $|(\boldsymbol{PK}_1^*\cup\boldsymbol{PK}_2^*) \cap \boldsymbol{F}|\le 2t-1$,
		\item $(\boldsymbol{PK}_i^*,*,m_i^*) \notin \boldsymbol{Q}$, $i\in \{1,2\}$,
		\item \parbox[t]{0.75\linewidth}{$\mathbf{Verify}(pp,\boldsymbol{PK}_i^*,\sigma_i^*,t,m_i^*)=1$, $i\in \{1,2\}$,}
		\item $\mathbf{Link}(\sigma_1^*,\sigma_2^*)=0$.
	\end{itemize}
	\parbox[t]{0.8\linewidth}{When these hold, the experiment returns 1; otherwise, it returns 0.}
	}
\end{algorithm}

\subsection{Security Models}
\begin{mydefinition}[Witness Extractability]
	We say that a linkable $(t,n)$-threshold ring adaptor signature scheme $\Pi_{\mathcal{R},\Sigma}$ exhibits witness extractability whenever, for a PPT adversary $\mathcal{A}$ in the witness extraction experiment $\mathbf{aWitExt}_{\mathcal{A},\Pi_{\mathcal{R},\Sigma}}$ (specified in Algorithm~\ref{alg:WitExt}), the following is true:
\end{mydefinition}
\[
	\text{Pr}[\mathbf{aWitExt}_{\mathcal{A},\Pi _{\mathcal{R},\Sigma}}(\lambda)=1]\le \epsilon(\lambda).
\]

\begin{mydefinition}[aEUF-CMA Security]
    A linkable $(t,n)$-threshold ring adaptor signature scheme $\Pi_{\mathcal{R},\Sigma}$ is existentially unforgeable under chosen-message attacks for adaptor signatures (aEUF-CMA secure) with respect to insider corruption, if for each PPT algorithm $\mathcal{A}$ that plays the role of the adversary in the forgeability experiment $\mathbf{aSignForge}_{\mathcal{A},\Pi_{\mathcal{R},\Sigma}}$ (specified in Algorithm~\ref{alg:SignForge}), the following condition is satisfied:
\end{mydefinition}
\[
	\text{Pr}[\mathbf{aSignForge}_{\mathcal{A},\Pi _{\mathcal{R},\Sigma}}(\lambda)=1]\le \epsilon(\lambda).
\]

\begin{mydefinition}[Pre-signature Anonymity]
\sloppy
Pre-signature anonymity for a linkable $(t,n)$-threshold ring adaptor signature scheme $\Pi_{\mathcal{R},\Sigma}$ is ensured by the following security condition. For all PPT adversaries $\mathcal{A}$ participating in the $\mathbf{aAnon}_{\mathcal{A},\Pi_{\mathcal{R},\Sigma}}$ experiment (specified in Algorithm~\ref{alg:Anon}), it must hold that:
\end{mydefinition}
\[
	\left| \text{Pr}\left[\mathbf{aAnon}_{\mathcal{A},\Pi _{\mathcal{R},\Sigma}}
	\left(\lambda  \right) =1  \right] -\frac{1}{2} \right|\le \epsilon\left(\lambda \right).
\]

\begin{mydefinition}[Linkability]
    A linkable $(t,n)$-threshold ring adaptor signature scheme $\Pi_{\mathcal{R},\Sigma}$ satisfies the linkability property within the security model accounting for insider corruption, if for each PPT algorithm $\mathcal{A}$ when interacting with the linkability experiment $\mathbf{aLink}_{\mathcal{A},\Pi_{\mathcal{R},\Sigma}}$ (specified in Algorithm~\ref{alg:Link}), the following condition is met:
\end{mydefinition}
\[
	\text{Pr}\left[\mathbf{aLink}_{\mathcal{A},\Pi _{\mathcal{R},\Sigma}}
	\left(\lambda  \right) =1  \right] \le \epsilon\left(\lambda \right). 
\]

\subsection{Construction}
The proposed linkable $(t,n)$-threshold ring adaptor signature scheme $\Pi_{\mathcal{R},\Sigma}$ is constructed as follows. Throughout this paper, for brevity, we will
write $y_i = \prod_{k=i}^{i+t-1} pk_{k}^d$ as a shorthand 
for $y_i = \prod_{k=i (mod\ n)}^{(i+t-1) (mod\ n)} pk_{k}^d$.

\begin{itemize}
	\item $\mathbf{Setup}(1^\lambda) \rightarrow pp$: Given the security parameter $\lambda$, the group generation algorithm $\mathcal{G}(\lambda)$ is invoked, producing $\{p,\mathbb{G}_p,g,h\}$. This results in a cyclic group $\mathbb{G}_p$ of prime order $p$ with two generators, $g$ and $h$. A collision-resistant hash function $H: \{0,1\}^* \to \mathbb{Z}_p$ is also selected. The algorithm then returns the public parameters $pp = \{p, \mathbb{G}_p, g, h, H\}$.
	\item $\mathbf{KeyGen}(pp) \rightarrow (pk, sk)$: On input $pp$, the algorithm first chooses a secret key $sk \in \mathbb{Z}_p^*$ at random. Next, it sets the public key to $pk = g^{sk}$. Finally, the key pair $(pk, sk)$ is output.
	\item $\mathbf{GenR}(pp) \rightarrow (\boldsymbol{W}, w)$: Given public parameters $pp$, the algorithm selects a witness $w \in \mathbb{Z}_p^*$ uniformly at random, computes $W_1=g^{w}$ and $W_2=h^{w}$, sets $\boldsymbol{W}=(W_1,W_2)$, and outputs the statement-witness pair $(\boldsymbol{W}, w)$.
	\item $\mathbf{PreSign}(pp, \boldsymbol{PK}, \boldsymbol{SK},m,\boldsymbol{W}) \rightarrow \tilde{\sigma}$: Given public parameters $pp$, a set of public keys 
	$\boldsymbol{PK}=\left\{pk_0,pk_1,\right.$ $\left.\\\dots,pk_{n-1}\right\}$, a set of secret keys 
	$\boldsymbol{SK}=\left\{sk_j,\dots,\right.$ $\left.\\sk_{j+t-1}\right\}$ corresponding to a subset of $\boldsymbol{PK}$, where $j \in [0,n-t]$ selected by the signer, 
	a message $m\in \{0,1\}^*$, and a statement $\boldsymbol{W}=(W_1,W_2)$ from the hard relation, the algorithm first computes the link tags for every secret key in $\boldsymbol{SK}$ as $tag_i=h^{sk_{j+i}}$ for $i \in [0,t-1]$, and sets $\boldsymbol{TAG} = (tag_0, \dots, tag_{t-1})$. Then, it computes $d=H(\boldsymbol{PK})$, and for every $i \in [0,n-1]$, it computes $y_i=\prod_{k=i}^{i+t-1}{pk_{k}^d}$, and $l=\prod_{k=0}^{t-1}{tag_{k}^d}$. Next, it chooses a random value $r \in \mathbb{Z}_p^*$, and for every $i \in [0,n-1]$ where $i \ne j$, it chooses a random challenge value $c_i \in \mathbb{Z}_p^*$. Then, it computes $R=g^r\cdot W_1\cdot \prod_{i=0,i\ne j}^{n-1}{y_{i}^{c_i}}$ and $T=h^r\cdot W_2\cdot \prod_{i=0,i\ne j}^{n-1}{l^{c_i}}$. After that, it computes the overall challenge $c=H(\boldsymbol{PK},R,T,m)$, and computes $c_j=c-\sum_{i=0,i\ne j}^{n-1}{c_i}$. Finally, it computes the pre-signature component $\tilde{z}=r-c_j\cdot d\cdot \sum_{k=j}^{j+t-1}{ sk_k}$, sets the pre-signature $\tilde{\sigma} = (\tilde{z},\boldsymbol{C},\boldsymbol{TAG} )$, where $\boldsymbol{C}=(c_0,\dots,c_j,\dots,c_{n-1})$, and outputs the pre-signature $\tilde{\sigma}$.
	\item $\mathbf{PreVerify}(pp,\boldsymbol{PK},\tilde{\sigma},t,m,\boldsymbol{W}) \rightarrow 0/1$: Given public parameters $pp$, a set of public keys 
	$\boldsymbol{PK}=\left\{pk_0,pk_1,\right.$ $\left.\\\dots,pk_{n-1}\right\}$, a pre-signature $\tilde{\sigma}=(\tilde{z},\boldsymbol{C},\boldsymbol{TAG})$, 
	a message $m\in \{0,1\}^*$, a threshold $t\in [1,n]$, and a statement $\boldsymbol{W}=(W_1,W_2)$ from the hard relation, 
	the algorithm first computes $d=H(\boldsymbol{PK})$, and for every $i \in [0,n-1]$, it computes $y_i=\prod_{k=i}^{i+t-1}{pk_{k}^{d}}$. Then, it computes $l=\prod_{k=0}^{t-1}{tag_{k}^{d}}$, $R'=g^{\tilde{z}}\cdot W_1\cdot \prod_{i=0}^{n-1}{y_{i}^{c_i}}$, and $T'=h^{\tilde{z}}\cdot W_2\cdot \prod_{i=0}^{n-1}{l^{c_i}}$. Next, it computes $c'=\sum_{i=0}^{n-1}{c_i}$. Finally, the algorithm returns 1, if $c' = H(\boldsymbol{PK},R',T',m)$, and 0 otherwise.
	\item $\mathbf{Adapt}(pp,\tilde{\sigma},w) \rightarrow \sigma$: Taking the public parameters $pp$, a pre-signature 
	$\tilde{\sigma}=\left(\tilde{z},\boldsymbol{C},\right.$ $\left.\\\boldsymbol{TAG}\right)$ and a witness $w \in \mathbb{Z}_p^*$, 
	the algorithm computes $z=\tilde{z}+w$, sets $\sigma=(z,\boldsymbol{C},\boldsymbol{TAG})$, and outputs the full signature $\sigma$.
	\item $\mathbf{Verify}(pp,\boldsymbol{PK},\sigma,t,m) \rightarrow 0/1$: Given public parameters $pp$, a set of public keys 
	$\boldsymbol{PK}=\{pk_0,\dots,pk_{n-1}\}$, a signature $\sigma=(z,\boldsymbol{C},\boldsymbol{TAG})$, 
	a message $m\in \{0,1\}^*$, and a threshold $t\in [1,n]$, the algorithm first computes $d=H(\boldsymbol{PK})$, and for every $i \in [0,n-1]$, it computes $y_i=\prod_{k=i}^{i+t-1}{pk_{k}^{d}}$. Then, it computes $l=\prod_{k=0}^{t-1}{tag_{k}^{d}}$, $R''=g^{z}\cdot \prod_{i=0}^{n-1}{y_{i}^{c_i}}$, and $T''=h^z \cdot \prod_{i=0}^{n-1}{l^{c_i}}$. Next, it computes $c''=\sum_{i=0}^{n-1}{c_i}$. If $c'' = H(\boldsymbol{PK},R'',T'',m)$, it outputs 1. Otherwise, it outputs 0.
	\item $\mathbf{Ext}(pp,\boldsymbol{W},\tilde{\sigma},\sigma) \rightarrow w/\bot$: Given public parameters $pp$, a statement 
	$\boldsymbol{W}=(W_1,W_2)$ from the hard relation, a pre-signature 
	$\tilde{\sigma}=(\tilde{z},\boldsymbol{C},\boldsymbol{TAG})$, and a signature 
	$\sigma=(z,\boldsymbol{C},\boldsymbol{TAG})$, the algorithm computes $w=z-\tilde{z}$. If 
	$W_1=g^w$ and $W_2=h^w$, it outputs the witness $w$. Otherwise, it outputs $\bot$.
	\item $\mathbf{Link}(\sigma',\sigma'') \rightarrow 0/1$: Given two signatures $\sigma'=$ $(z',\boldsymbol{C}',\boldsymbol{TAG}')$ and 
	$\sigma''=(z'',\boldsymbol{C}'',\boldsymbol{TAG}'')$, where 
	$\boldsymbol{TAG}'=\left(tag_0',\dots, tag_{t-1}'\right)$ and 
	$\boldsymbol{TAG}''=\left(tag_0'',\right.$ $\left.\\ \dots, tag_{t-1}''\right)$, the algorithm checks whether there exist $tag_i' \in \boldsymbol{TAG'}$ and $tag_j'' \in \boldsymbol{TAG''}$ such that $tag_i'=tag_j''$, where $i,j \in [0,t-1]$. If such $tag_i'$ and $tag_j''$ exist, it returns 1. Otherwise, it returns 0.
\end{itemize}

\begin{mytheorem}{\label{thm:pre-signature adaptability}}
	The proposed linkable $(t,n)$-threshold ring adaptor signature scheme $\Pi_{\mathcal{R},\Sigma}$ 
	guarantees pre-signature adaptability.
\end{mytheorem}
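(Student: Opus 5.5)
We argue directly from the definitions of $\mathbf{PreVerify}$, $\mathbf{Adapt}$ and $\mathbf{Verify}$. Fix $pp$, a key set $\boldsymbol{PK}$, a threshold $t\in[1,n]$, a message $m$, a pair $(\boldsymbol{W}=(W_1,W_2),w)$ output by $\mathbf{GenR}(pp)$, so that $W_1=g^{w}$ and $W_2=h^{w}$, and an \emph{arbitrary} bitstring $\tilde{\sigma}=(\tilde{z},\boldsymbol{C},\boldsymbol{TAG})$ with $\mathbf{PreVerify}(pp,\boldsymbol{PK},\tilde{\sigma},t,m,\boldsymbol{W})=1$. Since $\tilde{\sigma}$ need not have been produced by $\mathbf{PreSign}$, nothing about how $\tilde{z}$ or $\boldsymbol{C}$ were formed may be used. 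The only available fact is the pre-verification equation
\[
\sum_{i=0}^{n-1}c_i = H\bigl(\boldsymbol{PK},R',T',m\bigr),\quad R'=g^{\tilde{z}}W_1\prod_{i=0}^{n-1}y_i^{c_i},\quad T'=h^{\tilde{z}}W_2\prod_{i=0}^{n-1}l^{c_i}.
\]

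Let $\sigma=\mathbf{Adapt}(pp,\tilde{\sigma},w)=(z,\boldsymbol{C},\boldsymbol{TAG})$ with $z=\tilde{z}+w$. The vector $\boldsymbol{C}$ and the tags $\boldsymbol{TAG}$ are unchanged. Hence, when $\mathbf{Verify}$ is run on $\sigma$, it recomputes exactly the same $d=H(\boldsymbol{PK})$, the same $y_i$ for $i\in[0,n-1]$, the same $l=\prod_{k=0}^{t-1}tag_k^{d}$, and the same $c''=\sum_i c_i=c'$.

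The key computation is to show $R''=R'$ and $T''=T'$. Substituting $W_1=g^{w}$ gives
\[
R''=g^{\tilde{z}+w}\prod_{i=0}^{n-1}y_i^{c_i}=g^{\tilde{z}}\,W_1\prod_{i=0}^{n-1}y_i^{c_i}=R'.
\]
Substituting $W_2=h^{w}$ gives, in the same way,
\[
T''=h^{\tilde{z}+w}\prod_{i=0}^{n-1}l^{c_i}=h^{\tilde{z}}\,W_2\prod_{i=0}^{n-1}l^{c_i}=T'.
\]
Because $H$ is a deterministic function, $H(\boldsymbol{PK},R'',T'',m)=H(\boldsymbol{PK},R',T',m)=c'=c''$. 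Therefore $\mathbf{Verify}(pp,\boldsymbol{PK},\sigma,t,m)=1$ with probability exactly $1$, as the definition of pre-signature adaptability requires.

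The only step needing care is that both components of the statement are used. The relation ties $W_1$ and $W_2$ to the same $w$ under the two generators $g$ and $h$, and this common witness is what lets a single scalar shift $z=\tilde{z}+w$ absorb $W_1$ in $R'$ and $W_2$ in $T'$ at once. This is guaranteed because $\mathbf{GenR}$ in the construction outputs $W_1=g^{w}$, $W_2=h^{w}$ with one $w$. The argument also relies on the exponent arithmetic being carried out in $\mathbb{Z}_p$, the order of $\mathbb{G}_p$. Beyond this, no hardness assumption or probabilistic argument is needed.
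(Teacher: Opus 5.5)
Your proposal is correct and follows essentially the same route as the paper: substituting $W_1=g^{w}$ and $W_2=h^{w}$ into the pre-verification values gives $R''=R'$ and $T''=T'$. Since $\mathbf{Adapt}$ leaves $\boldsymbol{C}$ unchanged (so $c''=c'$), the verification equation holds, and your explicit remark that $d$, the $y_i$, and $l$ are recomputed identically is a small clarification the paper leaves implicit.
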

\begin{proof}
Given public parameters $pp=\left(p,\mathbb{G},g,h,H\right)$, public keys 
$\boldsymbol{PK}=\{pk_0,\dots,pk_{n-1}\}$, message $m\in \{0,1\}^*$, threshold $t\in [1,n]$, 
and pre-signature $\tilde{\sigma}=(\tilde{z},\boldsymbol{C},\boldsymbol{TAG})$. We 
define $\boldsymbol{W}=\{W_1=g^w,W_2=h^w\}$, and signature 
$\sigma=(z=\tilde{z}+w,\boldsymbol{C},\boldsymbol{TAG})$, where $(W_1,w)\in \mathcal{R}$ and 
$(W_2,w)\in \mathcal{R}$, and $w\in \mathbb{Z}_p^*$. Assuming that 
$\mathbf{PreVerify}(pp,\boldsymbol{PK},\tilde{\sigma},t,m,\boldsymbol{W})=1$, we have
\begin{align*}
	c'&=H(\boldsymbol{PK},R',T',m)\\
	&= H\left(\boldsymbol{PK},g^{\tilde{z}}\cdot W_1 \cdot \prod_{i=0}^{n-1}{y_i^{c_i}},h^{\tilde{z}}\cdot W_2 \cdot \prod_{i=0}^{n-1}{l^{c_i}},m\right)\\
	&= H\left(\boldsymbol{PK},g^{\tilde{z}}\cdot g^{w} \cdot \prod_{i=0}^{n-1}{y_i^{c_i}},h^{\tilde{z}}\cdot h^{w} \cdot \prod_{i=0}^{n-1}{l^{c_i}},m\right)\\
	&= H\left(\boldsymbol{PK},g^{z}\cdot \prod_{i=0}^{n-1}{y_i^{c_i}},h^{z} \cdot \prod_{i=0}^{n-1}{l^{c_i}},m\right)\\
	&= H\left(\boldsymbol{PK},R'',T'',m\right)
\end{align*}
which implies that $\mathbf{Verify}(pp,\boldsymbol{PK},\sigma,t,m)=1$.
\end{proof}

\begin{mytheorem}
	The proposed linkable $(t,n)$-threshold ring adaptor signature scheme $\Pi_{\mathcal{R},\Sigma}$ 
	satisfies correctness.
\end{mytheorem}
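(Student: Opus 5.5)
We verify the three conjuncts of Definition~\ref{def:correctness} in turn for an honestly generated pre-signature $\tilde{\sigma}=(\tilde{z},\boldsymbol{C},\boldsymbol{TAG})$ produced by $\mathbf{PreSign}$ on $\boldsymbol{SK}=\{sk_j,\dots,sk_{j+t-1}\}$ with $(\boldsymbol{W},w)\leftarrow\mathbf{GenR}(pp)$, so $W_1=g^w$, $W_2=h^w$.

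\textbf{Step 1: $\mathbf{PreVerify}$ accepts.} The key identities are $y_j=g^{d\sum_{k=j}^{j+t-1}sk_k}$ and $l=\prod_{k=0}^{t-1}tag_k^d=h^{d\sum_{k=j}^{j+t-1}sk_k}$, which hold because the tags are $tag_i=h^{sk_{j+i}}$ and the window of $y_j$ covers exactly the signer's keys. Substituting $\tilde{z}=r-c_j d\sum_k sk_k$ into $R'=g^{\tilde{z}}W_1\prod_{i}y_i^{c_i}$ gives
\begin{align*}
R'=g^{r}\,g^{-c_j d\sum_k sk_k}\,W_1\,y_j^{c_j}\prod_{i\ne j}y_i^{c_i}=g^rW_1\prod_{i\ne j}y_i^{c_i}=R .
\end{align*}
The analogous computation with $h$ and $l$ in place of $g$ and $y_i$ gives $T'=T$. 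Since $c_j=c-\sum_{i\ne j}c_i$, we have $c'=\sum_i c_i=c=H(\boldsymbol{PK},R,T,m)=H(\boldsymbol{PK},R',T',m)$, so the check passes.

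\textbf{Step 2: $\mathbf{Verify}$ accepts.} Here we invoke Theorem~\ref{thm:pre-signature adaptability}. Its hypotheses are that $\mathbf{PreVerify}=1$ (Step 1), that $(W_1,w),(W_2,w)\in\mathcal{R}$ (by $\mathbf{GenR}$), and that $\sigma=\mathbf{Adapt}(pp,\tilde{\sigma},w)=(\tilde{z}+w,\boldsymbol{C},\boldsymbol{TAG})$. It therefore yields $\mathbf{Verify}(pp,\boldsymbol{PK},\sigma,t,m)=1$.

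\textbf{Step 3: extraction.} $\mathbf{Ext}$ computes $w'=z-\tilde{z}=w$ in $\mathbb{Z}_p$. Then $g^{w'}=W_1$ and $h^{w'}=W_2$, so $\mathbf{Ext}$ outputs $w'$ with $(W_1,w'),(W_2,w')\in\mathcal{R}$.

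All three events hold deterministically, so the probability in Definition~\ref{def:correctness} is $1$. The only step that needs care is Step 1: one must check that the exponent of $l$ matches that of $y_j$, i.e.\ that $\boldsymbol{TAG}$ corresponds exactly to the keys in window $j$, so that the cancellation in $T'$ mirrors the one in $R'$. One should also note that although the sets are indexed mod $n$, the signer's window $j\in[0,n-t]$ does not wrap around.
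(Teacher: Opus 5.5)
Your proof is correct and follows the paper's argument: you substitute $\tilde{z}=r-c_j d\sum_{k=j}^{j+t-1}sk_k$ to get $R'=R$ and $T'=T$ (using that $y_j$ and $l$ carry the same exponent $d\sum_k sk_k$), conclude that $\mathbf{PreVerify}$ accepts, and then invoke the pre-signature adaptability theorem for $\mathbf{Verify}$. Your Step 3, which checks that $\mathbf{Ext}$ returns $w'=z-\tilde{z}=w$ with $(W_1,w'),(W_2,w')\in\mathcal{R}$, covers the third conjunct of the correctness definition; the paper's proof leaves that conjunct implicit, so your version is the more complete one.
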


\begin{proof}
Given public parameters $pp=\left(p,\mathbb{G},g,h,H\right)$, public keys 
$\boldsymbol{PK}=\{pk_0,\dots,pk_{n-1}\}$, secret keys $\boldsymbol{SK}=\{sk_j,\dots,sk_{j+t-1}\}$, 
message $m\in \{0,1\}^*$, threshold $t\in [1,n]$, 
and $\boldsymbol{W}=\{W_1=g^w,W_2=h^w\}$, where $pk_j=g^{sk_j},\dots,pk_{j+t-1}=g^{sk_{j+t-1}}$, 
$(W_1,w)\in \mathcal{R}$, $(W_2,w)\in \mathcal{R}$, and $w\in \mathbb{Z}_p^*$. 
For the pre-signature $\tilde{\sigma}=(\tilde{z},\boldsymbol{C},\boldsymbol{TAG})
\leftarrow \mathbf{PreSign}(pp,\boldsymbol{PK},\boldsymbol{SK},m,\boldsymbol{W})$, 
it holds that $\tilde{z}=r-c_j\cdot d\cdot \sum_{k=j}^{j+t-1}{sk_k}$ 
for some $r\in \mathbb{Z}_p^*$. Since
\begin{align*}
	R'&=g^{\tilde{z}}\cdot W_1 \cdot \prod_{i=0}^{n-1}{y_i^{c_i}}\\
	&=g^{r-c_j\cdot d\cdot \sum_{k=j}^{j+t-1}{sk_k}}\cdot W_1 \cdot y_j^{c_j} \prod_{i=0,i\ne j}^{n-1}{y_i^{c_i}}\\
	&=g^{r-c_j\cdot d\cdot \sum_{k=j}^{j+t-1}{sk_k}}\cdot W_1\cdot \left(\prod_{k=j}^{j+t-1}{pk_k^{d}}\right)^{c_j}   \prod_{i=0,i\ne j}^{n-1}{y_i^{c_i}}\\
	&=g^{r-c_j\cdot d\cdot \sum_{k=j}^{j+t-1}{sk_k}}\cdot W_1\cdot g^{c_j\cdot d\cdot \sum_{k=j}^{j+t-1}{sk_k}}   \prod_{i=0,i\ne j}^{n-1}{y_i^{c_i}}\\
	&=g^{r}\cdot W_1 \cdot \prod_{i=0,i\ne j}^{n-1}{y_i^{c_i}}\\
	&=R,
\end{align*}
and
\begin{align*}
	T'&=h^{\tilde{z}}\cdot W_2 \cdot \prod_{i=0}^{n-1}{l^{c_i}}\\
	&=h^{r-c_j\cdot d\cdot \sum_{k=j}^{j+t-1}{sk_k}}\cdot W_2 \cdot l^{c_j}  \prod_{i=0,i\ne j}^{n-1}{l^{c_i}}\\
	&=h^{r-c_j\cdot d\cdot \sum_{k=j}^{j+t-1}{sk_k}}\cdot W_2 \cdot \left(\prod_{k=0}^{t-1}{tag_k^d}\right)^{c_j} \prod_{i=0,i\ne j}^{n-1}{l^{c_i}}\\
	&=h^{r-c_j\cdot d\cdot \sum_{k=j}^{j+t-1}{sk_k}}\cdot W_2  \cdot h^{c_j\cdot d\cdot \sum_{k=j}^{j+t-1}{sk_k}}  \prod_{i=0,i\ne j}^{n-1}{l^{c_i}}\\
	&=h^{r}\cdot W_2 \cdot \prod_{i=0,i\ne j}^{n-1}{l^{c_i}}\\
	&=T,
\end{align*}
we have $H(\boldsymbol{PK},R',T',m)=H\left(\boldsymbol{PK},R,T,m\right)=c=\sum_{i=0}^{n-1}{c_i}=c'$, then 
$\mathbf{PreVerify}\left(pp,\boldsymbol{PK},\tilde{\sigma},t,m,\boldsymbol{W}\right)$ outputs 1. 
By Theorem~\ref{thm:pre-signature adaptability}, we have
$\mathbf{Verify}\left(pp,\boldsymbol{PK},\sigma,t,m\right)=1$ for 
$\sigma=\left(z=\tilde{z}+w,\right.$ $\left.\\\boldsymbol{C},\boldsymbol{TAG}\right)\leftarrow \mathbf{Adapt}(pp,\tilde{\sigma},w)$, 
which completes the proof.
\end{proof}

\begin{mytheorem}
	The proposed linkable $(t,n)$-threshold ring adaptor signature scheme $\Pi_{\mathcal{R},\Sigma}$ 
	guarantees witness extractability.
\end{mytheorem}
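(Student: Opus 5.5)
The plan is a reduction in the random oracle model: from any PPT adversary $\mathcal{A}$ that wins $\mathbf{aWitExt}_{\mathcal{A},\Pi_{\mathcal{R},\Sigma}}$ with non-negligible probability, build a solver for the discrete logarithm problem (equivalently, a breaker of the hardness of $\mathcal{R}$).

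\textbf{Winning condition.} In the experiment, $\mathcal{C}$ produces $\tilde{\sigma}^*=(\tilde{z},\boldsymbol{C},\boldsymbol{TAG})$ honestly on $(m^*,\boldsymbol{W}^*)$. By correctness (the preceding theorem), it satisfies $R'=g^{\tilde z}W_1^*\prod_i y_i^{c_i}$ and $T'=h^{\tilde z}W_2^*\prod_i l^{c_i}$, together with $\sum_i c_i=H(\boldsymbol{PK},R',T',m^*)$. The adversary returns $\sigma^*=(z,\boldsymbol{C}^\dagger,\boldsymbol{TAG}^\dagger)$, which must verify. Since $\mathbf{Ext}$ only computes $w^*=z-\tilde z$, the adversary wins exactly when $g^{z-\tilde z}\neq W_1^*$ or $h^{z-\tilde z}\neq W_2^*$.

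\textbf{Case split.} I would split on whether $(\boldsymbol{C}^\dagger,\boldsymbol{TAG}^\dagger)=(\boldsymbol{C},\boldsymbol{TAG})$.

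\emph{Case 1: the vectors coincide.} Then $\sigma^*$ and $\tilde{\sigma}^*$ share the challenge sum $c=\sum_i c_i$, so $H(\boldsymbol{PK},R'',T'',m^*)=H(\boldsymbol{PK},R',T',m^*)$. By collision resistance of $H$, except with negligible probability $R''=R'$ and $T''=T'$. Expanding both sides, the product terms cancel and we get $g^{z}=g^{\tilde z}W_1^*$ and $h^{z}=h^{\tilde z}W_2^*$. This says exactly that $w^*=z-\tilde z$ satisfies $(W_1^*,w^*)\in\mathcal{R}$ and $(W_2^*,w^*)\in\mathcal{R}$, contradicting the winning condition.

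\emph{Case 2: the vectors differ.} Here $\sigma^*$ is a valid full signature on $m^*$ that is not the adaptation of $\tilde{\sigma}^*$. Because $m^*\notin\boldsymbol{Q}$, it is essentially a forgery relative to the key ring $\boldsymbol{PK}$, where $\mathcal{A}$ holds no secret keys. I would reduce it to DL by embedding a challenge $Y$ into one $pk_{i^*}$; the other keys are generated honestly. Signing and pre-signing oracles are simulated by programming the random oracle in the HVZK style: pick $z$ and all $c_i$, and set $R,T$ accordingly. The challenge pre-signature is simulated the same way. Tags $h^{sk}$ for the embedded key can be handled by setting $h=g^{\alpha}$ with $\alpha$ known, so that $tag=Y^{\alpha}$.

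\textbf{Main obstacle: extraction in Case 2.} Applying the forking lemma to the query defining $c$ yields two accepting transcripts with the same $R'',T''$ and different challenge vectors. Dividing the two verification equations gives $g^{z_1-z_2}=\prod_i y_i^{c_i^{(2)}-c_i^{(1)}}$, a linear relation in the secret keys, but the SWT makes each $y_i$ an aggregate of $t$ keys. With the keys outside $i^*$ known, I would solve this relation for $\log Y$. That works provided the coefficient of $sk_{i^*}$, namely $d\sum_{i:\,i^*\in[i,i+t-1]}(c_i^{(2)}-c_i^{(1)})$, is nonzero. Its being nonzero with noticeable probability follows from guessing $i^*$ among the $n$ positions where the challenge vector changed, at a loss factor of $n$. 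I would also check that the rogue-key hash $d=H(\boldsymbol{PK})$ prevents cancellations.

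\textbf{Conclusion.} Combining both cases, $\Pr[\mathbf{aWitExt}_{\mathcal{A},\Pi_{\mathcal{R},\Sigma}}(\lambda)=1]$ is bounded by the collision probability plus a polynomial factor times $\text{Adv}^{DL}$, which is negligible.
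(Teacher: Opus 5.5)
Your Case~1 is exactly the paper's entire proof. The paper writes the adversary's output as $\sigma^*=(z^*,\boldsymbol{C}^*,\boldsymbol{TAG}^*)$, i.e.\ it silently assumes the pre-signature's vectors are reused, and then derives an $H$-collision. You are right that the other case needs treatment, but your Case~2 fails. Calling a valid $\sigma^*$ with $(\boldsymbol{C}^\dagger,\boldsymbol{TAG}^\dagger)\neq(\boldsymbol{C},\boldsymbol{TAG})$ ``essentially a forgery'' ignores that $\mathcal{A}$ holds $\tilde{\sigma}^*$ on the very same $(\boldsymbol{PK},m^*)$. The condition $m^*\notin\boldsymbol{Q}$ says nothing about that. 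Concretely, $\sigma^*$ may satisfy $(R'',T'')=(R',T')$. Its challenge sum is then the value programmed for $\tilde{\sigma}^*$, not the answer to a fresh adversarial query, so there is nothing to fork on and no relation among the keys to extract. This sub-case is reachable because the hash input $(\boldsymbol{PK},R,T,m)$ binds neither the individual $c_i$, nor $\boldsymbol{TAG}$, nor $\boldsymbol{W}$, and $\mathbf{Verify}$ sees the tags only through $l^{\sum_i c_i}$.

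In fact this sub-case is an attack, so no reduction can close it. Let $\mathcal{A}$ submit $\boldsymbol{W}^*=(g^{w},h^{w+\delta})$ with $\delta\neq 0$; Algorithm~\ref{alg:WitExt} does not require $\boldsymbol{W}^*\in\mathcal{L}_{\mathcal{R}}$. Given $\tilde{\sigma}^*=(\tilde z,\boldsymbol{C},\boldsymbol{TAG})$, it sets $c=\sum_i c_i$ and $d=H(\boldsymbol{PK})$, replaces $tag_0$ by $tag_0\cdot h^{\delta/(cd)}$, and outputs $\sigma^*=(\tilde z+w,\boldsymbol{C},\boldsymbol{TAG}^\dagger)$. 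Then $R''=g^{\tilde z}W_1^*\prod_i y_i^{c_i}=R'$ and $T''=h^{\tilde z+w}l^{c}h^{\delta}=h^{\tilde z}W_2^*l^{c}=T'$. So $\mathbf{Verify}$ accepts while $\mathbf{Ext}$ returns $\bot$, and $\mathcal{A}$ wins with overwhelming probability without any secret key.

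The same works on $\boldsymbol{C}$: take $\boldsymbol{W}^*=(g^{w}\prod_i y_i^{\Delta_i},h^{w})$ with $\sum_i\Delta_i=0$ and $\prod_i y_i^{\Delta_i}\neq 1$, and output $\sigma^*=(\tilde z+w,\boldsymbol{C}+\Delta,\boldsymbol{TAG})$. Combining both tricks with tags $h^{sk_k}$ harvested from $\mathcal{SO}$ answers even makes $\boldsymbol{W}^*$ a well-formed pair $(g^{x},h^{x})$. So, for this construction and this experiment, the statement is false and no proof can succeed.

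To get a provable statement you must change both, e.g.\ hash $\boldsymbol{TAG}$ into the challenge and require $\boldsymbol{W}^*\in\mathcal{L}_{\mathcal{R}}$. Then an unchanged hash input forces $\boldsymbol{TAG}^\dagger=\boldsymbol{TAG}$ and $T''=T'$, i.e.\ $h^{z-\tilde z}=W_2^*$, so $z-\tilde z$ is the witness. Only genuinely fresh queries remain for your forking argument. There, a good $i^*$ exists not because $c_{i^*}$ changed; one can have $c^{(1)}_{i^*}\neq c^{(2)}_{i^*}$ yet a zero coefficient at $i^*$. It exists because the exponents $S_k=\sum_{i:\,k\in[i,i+t-1]}(c^{(2)}_i-c^{(1)}_i)$ of $pk_k^{d}$ add up to $t\sum_i(c^{(2)}_i-c^{(1)}_i)\neq 0$. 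The factor $d$ plays no role in this.
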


\begin{proof}
If a PPT adversary $\mathcal{A}$ can win the witness extractability experiment $\mathbf{aWitExt}_{\mathcal{A},\Pi_{\mathcal{R},\Sigma}}$ with non-negligible probability, this adversary could then be used by a simulator $\mathcal{B}$ to break the collision resistance of the hash function. The construction of $\mathcal{B}$ is as follows.
Given the system parameter $pp=(p,\mathbb{G},g,h,H)$, $\mathcal{B}$ will outupt a pair of hash collisions for $H$.

 $\mathsf{Setup}$. $\mathcal{B}$ initializes the signature query set $\boldsymbol{Q}=\emptyset$, 
  and sends $pp=(p,\mathbb{G},g,h,H)$ to $\mathcal{A}$. $\mathcal{A}$ runs $\mathbf{GenR}(pp) \to (\boldsymbol{W},w)$, 
  and sends $\boldsymbol{W}$ to $\mathcal{B}$.

 $\mathsf{KeyGen}$. $\mathcal{B}$ runs $\mathbf{KeyGen}(pp) \to (pk_i,sk_i)$, 
  for $i \in [0,n-1]$, sets $\boldsymbol{PK}=(pk_0,pk_1,\dots,pk_{n-1})$, 
  and sends $\boldsymbol{PK}$ to $\mathcal{A}$.

 $\mathsf{Query}$. The procedure for $\mathcal{B}$ to answer the adversary's oracle queries is detailed below.
  \begin{itemize}[]
  	\item $\mathcal{SO}(\boldsymbol{PK},\boldsymbol{S},m)$. 
	$\mathcal{B}$ runs $\mathbf{Sign}\left(pp,\boldsymbol{PK},\left\{ sk_i \right\} _{i\in \boldsymbol{S}},m\right)$ $\to \sigma$, 
	sets $\boldsymbol{Q}=\boldsymbol{Q} \cup \left\{m\right\}$, 
	and returns the signature $\sigma$.
	\item $\mathcal{PSO}(\boldsymbol{PK},\boldsymbol{S},m,\boldsymbol{W})$. 
	$\mathcal{B}$ runs $\mathbf{PreSign}\left(pp,\boldsymbol{PK},\right.$ $\left.\\\left\{ sk_i \right\} _{i\in \boldsymbol{S}},m,\boldsymbol{W}\right) \to \tilde{\sigma}$, 
	sets $\boldsymbol{Q}=\boldsymbol{Q} \cup \left\{m \right\}$, 
	and returns the pre-signature $\tilde{\sigma}$.
  \end{itemize}

 $\mathsf{Witness\ Extraction}$. $\mathcal{A}$ sends the target message $\left(\boldsymbol{S}^*,\right.$ $\left.\\m^*,\boldsymbol{W}^*=(W_1^*,W_2^*)\right)$ 
	to $\mathcal{B}$. The simulator $\mathcal{B}$ runs 
	$\mathbf{PreSign}\left(pp,\boldsymbol{PK},\{sk_i\} _{i\in \boldsymbol{S}^*},m^*,\boldsymbol{W}^*\right)\to \tilde{\sigma}^*$, 
	where $\tilde{\sigma}^*$ $=(\tilde{z}^*,\boldsymbol{C}^*,\boldsymbol{TAG}^*)$ then sends $\tilde{\sigma}^*$ to $\mathcal{A}$. 
	$\mathcal{A}$ generates a signature $\sigma^*$ of $m^*$, where $\sigma^*=(z^*,\boldsymbol{C}^*,\boldsymbol{TAG}^*)$, 
	and sends $\sigma^*$ to $\mathcal{C}$. 
	$\mathcal{C}$ runs $\mathbf{Ext}(pp,\boldsymbol{W^*},\tilde{\sigma}^*,\sigma^*)\to w^*$, where $w^*=z^*-\tilde{z}^*$. 
	At the end $\mathcal{A}$ wins the experiment with the following conditions:
	\begin{itemize}[]
		\item $m^* \notin \boldsymbol{Q}$,
		\item $\mathbf{Verify}(pp,\boldsymbol{PK},\sigma^*,t,m^*)=1$,
		\item $(W_1^*,w^*)\notin \mathcal{R} \lor (W_2^*,w^*)\notin \mathcal{R}$.
	\end{itemize}

	As described in the $\mathbf{PreSign}$ algorithm, we have $R=g^r\cdot W_1\cdot \prod_{i=0,i\ne j}^{n-1}{y_{i}^{c_i}}$, 
	$T=h^r\cdot W_2\cdot \prod_{i=0,i\ne j}^{n-1}{l^{c_i}}$, $c=H(\boldsymbol{PK},R,T,m^*)$, $c_j=c-\sum_{i=0,i\ne j}^{n-1}{c_i}$, 
	and $\tilde{z}^*=r-c_j\cdot d\cdot \sum_{k=j}^{j+t-1}{sk_k}$. 
	Since $\sigma^*$ is a valid signature, we have $R^*=g^{z^*}\cdot \prod_{i=0}^{n-1}{y_{i}^{c_i}}$, 
	$T^*=h^{z^*}\cdot \prod_{i=0}^{n-1}{l^{c_i}}$, and $\sum_{i=0}^{n-1}{c_i}=c^*=H(\boldsymbol{PK},R^*,T^*,m^*)$. Therefore,
	\begin{align*}
		R^*&=g^{z^*}\cdot \prod_{i=0}^{n-1}{y_{i}^{c_i}}\\
		&=g^{\tilde{z}^*} g^{w^*}\cdot \prod_{i=0}^{n-1}{y_{i}^{c_i}}\\
		&=g^{r-c_j\cdot d\cdot \sum_{k=j}^{j+t-1}{sk_k}} g^{w^*}\cdot y_j^{c_j}  \prod_{i=0,i\ne j}^{n-1}{y_{i}^{c_i}}\\
		&=g^{r-c_j\cdot d\cdot \sum_{k=j}^{j+t-1}{sk_k}} g^{w^*} \cdot \left(\prod_{k=j}^{j+t-1}{pk_k^d}\right)^{c_j} \prod_{i=0,i\ne j}^{n-1}{y_{i}^{c_i}}\\
		&=g^{r-c_j\cdot d\cdot \sum_{k=j}^{j+t-1}{sk_k}} g^{w^*} \cdot g^{c_j\cdot d\cdot \sum_{k=j}^{j+t-1}{sk_k}} \prod_{i=0,i\ne j}^{n-1}{y_{i}^{c_i}}\\
		&=g^r g^{w^*}\cdot \prod_{i=0,i\ne j}^{n-1}{y_{i}^{c_i}},
	\end{align*}
	and 
	\begin{align*}
		T^*&=h^{z^*}\cdot \prod_{i=0}^{n-1}{l^{c_i}}\\
		&=h^{\tilde{z}^*}\cdot h^{w^*}\cdot \prod_{i=0}^{n-1}{l^{c_i}}\\
		&=h^{r-c_j\cdot d\cdot \sum_{k=j}^{j+t-1}{sk_k}}h^{w^*}\cdot l^{c_j}  \prod_{i=0,i\ne j}^{n-1}{l^{c_i}}\\
		&=h^{r-c_j\cdot d\cdot \sum_{k=j}^{j+t-1}{sk_k}} h^{w^*} \cdot \left(\prod_{k=0}^{t-1}{tag_k^{d}}\right)^{c_j}  \prod_{i=0,i\ne j}^{n-1}{l^{c_i}}\\
		&=h^{r-c_j\cdot d\cdot \sum_{k=j}^{j+t-1}{sk_k}} h^{w^*} \cdot h^{c_j\cdot d\cdot \sum_{k=j}^{j+t-1}{sk_k}}  \prod_{i=0,i\ne j}^{n-1}{l^{c_i}}\\
		&=h^r\ h^{w^*}\cdot \prod_{i=0,i\ne j}^{n-1}{l^{c_i}}.
	\end{align*}
	Since $(W_1^*,w^*)\notin \mathcal{R} \lor (W_2^*,w^*)\notin \mathcal{R}$, we have $W_1^*\ne g^{w^*}$ or $W_2^*\ne h^{w^*}$.
	Thus, we have $R^*\ne R$ or $T^*\ne T$, then the hash inputs $(\boldsymbol{PK},R^*,T^*,m^*)$ and $(\boldsymbol{PK},R,T,m^*)$ are different.
	However, since $H(\boldsymbol{PK},R^*,T^*,m^*)=c^*=\sum_{i=0}^{n-1}{c_i}=c=H(\boldsymbol{PK},R,T,m)$, we obtain a pair of hash collisions for $H$:
	$(\boldsymbol{PK},R^*,T^*,m^*)$ and $(\boldsymbol{PK},R,T,m^*)$.
\end{proof}

\begin{mytheorem}{\label{thm:unforgeability}}
	The proposed linkable $(t,n)$-threshold ring adaptor signature scheme $\Pi_{\mathcal{R},\Sigma}$ provides $(\epsilon(\lambda),q_h,q_s,q_{ps})$-unforgeability within the aEUF-CMA security model under insider corruption, provided that the $\epsilon'(\lambda)$-DL assumption holds in the group $\mathbb{G}_p$ with $p > q_{ps}(q_h+q_{ps}-1)+q_{s}(q_h+q_{s}-1)$, and provided that $\epsilon'(\lambda) \ge \frac{n-t+1}{32nq_k}\epsilon(\lambda)$. Here, $q_h$, $q_s$, $q_{ps}$, $q_k$, $n$, and $t$ denote the number of hash queries, signature queries, pre-signature queries, public keys, ring size, and threshold, respectively.
\end{mytheorem}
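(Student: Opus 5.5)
We reduce a forger $\mathcal{A}$ for $\mathbf{aSignForge}$ to a DL solver $\mathcal{B}$ in the random oracle model, via the forking lemma. $\mathcal{B}$ receives a DL instance $Y=g^{x}$. It runs $\mathbf{Setup}$ with $h=g^{\alpha}$ for a known random $\alpha$, and samples $(\boldsymbol{W},w)\leftarrow\mathbf{GenR}$ itself. It picks a random index $i^*\in[1,q_k]$ and sets $pk_{i^*}=Y$; every other key $pk_i=g^{sk_i}$ is honestly generated. Corruption queries $\mathcal{CO}(i)$ are answered honestly, and $\mathcal{B}$ aborts if $i=i^*$. The random oracle $H$ is simulated lazily with a table.

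\textbf{Simulating signing without the key.} Queries to $\mathcal{SO}$ and $\mathcal{PSO}$ that involve $pk_{i^*}$ in the signing window are answered by HVZK-style programming. $\mathcal{B}$ picks all $c_0,\dots,c_{n-1}$ and $z$ (resp.\ $\tilde z$) at random. The tag $tag=h^{sk_{i^*}}$ is computed as $Y^{\alpha}$, so $l$ is computable. $\mathcal{B}$ then sets $R=g^{z}\prod y_i^{c_i}$ and $T=h^{z}\prod l^{c_i}$, multiplying in $W_1,W_2$ in the pre-signature case. Finally it programs $H(\boldsymbol{PK},R,T,m)=\sum c_i$. Programming fails only if that point was already queried. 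Since $R$ carries fresh randomness from $z$, each query collides with at most $q_h+q_{ps}+q_s-1$ earlier points with probability $1/p$. Summing over queries gives the failure bound $\frac{q_{ps}(q_h+q_{ps}-1)+q_s(q_h+q_s-1)}{p}$, which is where the condition on $p$ enters. The challenge pre-signature $\tilde\sigma^*$ is simulated the same way.

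\textbf{Forking.} Suppose $\mathcal{A}$ outputs a valid $(\boldsymbol{PK}^*,\sigma^*,m^*)$ with $(\boldsymbol{PK}^*,*,m^*)\notin\boldsymbol{Q}$. Then, except with negligible probability, $c^*=H(\boldsymbol{PK}^*,R^*,T^*,m^*)$ was a genuine oracle query. $\mathcal{B}$ rewinds $\mathcal{A}$ to that query and answers it with a fresh value, obtaining a second forgery with the same $(R^*,T^*)$ but a different sum of challenges. Here $R^*$ is fixed as $g^{z}\prod_i y_i^{c_i}=g^{z'}\prod_i y_i^{c'_i}$.

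The first difficulty is that the challenges split across $n$ windows. We need an index $j$ with $c_j\ne c'_j$; one exists because $\sum c_i\ne\sum c'_i$. As in Duan et al.'s argument, the adversary's commitment determines which $c_i$ are "free", so with the $1/n$-type guessing loss at most one window changes. Subtracting the two equations gives
\[
g^{z-z'}=y_j^{\,c'_j-c_j},
\]
so $\mathcal{B}$ learns $\log_g y_j=d\sum_{k=j}^{j+t-1}sk_k$.

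Because $|\boldsymbol{PK}^*\cap\boldsymbol{F}|\le t-1$, every length-$t$ window contains an uncorrupted key. $\mathcal{B}$ needs that key to be $pk_{i^*}$ and every other key in the window to be known to it. This is the main obstacle: a window may contain several honest uncorrupted keys whose secrets $\mathcal{B}$ does know, since it generated them. In fact $\mathcal{B}$ knows all secrets except $sk_{i^*}$, so it suffices that $i^*$ lies in window $j$. That happens with probability about $\frac{n-t+1}{n q_k}$ once the guesses are counted: $q_k$ choices of the planted key, and the window structure contributes the $(n-t+1)/n$ factor. When it does, $\mathcal{B}$ solves $x=\bigl(d^{-1}\log_g y_j-\sum_{k\ne i^*}sk_k\bigr)$, with $d\ne0$ except with negligible probability.

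The forking lemma costs a factor of roughly $\epsilon^2/q_h$. The constant $1/32$ arises from the standard Pointcheval–Stern/Bellare–Neven variant with the stated success thresholds. Combining the loss factors yields $\epsilon'(\lambda)\ge\frac{n-t+1}{32nq_k}\epsilon(\lambda)$, possibly up to the forking term, which the statement absorbs. I expect the hardest steps to be the careful forking analysis, which must ensure the differing window contains the embedded key, and the accounting that reproduces the precise constant.
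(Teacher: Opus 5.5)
There is a genuine gap, at the step you flagged as the first difficulty. Your claim that, up to a $1/n$ loss, ``at most one window changes'' between the two forks has no basis here. A signature has a single aggregated commitment $R$ and a single response $z$, unlike a CDS-style OR proof with one commitment per branch. So $R$ does not pin down any individual $c_i$: the forger chooses the whole vector $\boldsymbol{C}$ after seeing $c$, subject only to $\sum_i c_i=c$. What the fork actually yields is
\[
z-z'=d\sum_{k=0}^{n-1}\Delta_k\, sk_k,\qquad \Delta_k=\sum_{i:\,k\in\{i,\dots,i+t-1\}}(c'_i-c_i),
\]
where $sk_k$ is the key of the $k$-th ring member and indices are taken mod $n$. $\mathcal{B}$ can solve for $x$ exactly when $\Delta_k\neq 0$ at the ring position $k$ of $Y$. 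Your observation that every window contains an uncorrupted key does not give this. The identity $\sum_k\Delta_k=t\sum_i(c'_i-c_i)\neq 0$ only shows $\Delta\neq 0$; nothing prevents $\Delta$ from being supported on corrupted keys alone.

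This is not a bookkeeping issue but an attack, so the statement cannot be proved for the scheme as specified. Take $n=3$, $t=2$, corrupt one key $pk_0$ (allowed, since $t-1=1$), and set $\boldsymbol{PK}^*=(pk_0,pk_1,pk_2)$. The forger proceeds as follows:
\begin{enumerate}
\item Pick $r$, set $R=g^r$ and $T=h^r$, and query $c=H(\boldsymbol{PK}^*,R,T,m^*)$.
\item Output $\boldsymbol{C}=(c,-c,c)$, $z=r-2cd\,sk_0$ and $\boldsymbol{TAG}=(h^{sk_0},h^{sk_0})$.
\end{enumerate}
Then $y_0^{c}y_1^{-c}y_2^{c}=pk_0^{2cd}$ and $l^{c}=h^{2cd\,sk_0}$, so $R''=R$, $T''=T$ and $\sum_i c_i=c$, and $\mathbf{Verify}$ accepts. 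Forking this forger gives $\Delta=(2(c'-c),0,0)$, which says nothing about any honest key. More generally, whenever $\gcd(n,t)=1$ the circulant window matrix has determinant $t$, hence is invertible mod $p$, and a single corrupted key suffices to forge.

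The paper's proof makes the same unjustified step. It aborts unless $c^*_{i^*}\neq c'_{i^*}$, credits this with probability $1/n$, and writes $\alpha$ as if $sk_{i^*}$ appeared in only one window. That ignores the other $t-1$ windows containing $pk_{i^*}$ and all other differing challenges. Reproducing its $\frac{n-t+1}{32nq_k}$ accounting therefore does not validate your argument.

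A few smaller points:
\begin{itemize}
\item Your claim that the forged hash point was a genuine adversary query overlooks the point $\mathcal{B}$ programmed for $\tilde\sigma^*$. A forgery reusing it needs a separate reduction to the hardness of $\mathcal{R}$, which your $\mathcal{B}$ cannot perform because it samples $w$ itself. The paper omits this case too.
\item A forking loss of order $\epsilon^2/q_h$ cannot be ``absorbed'' into a bound linear in $\epsilon$. The paper gets linearity only from the Bagherzandi et al.\ lemma, paying in running time.
\item On the positive side, choosing $h=g^{\alpha}$ makes the simulated tag $Y^{\alpha}=h^{sk_{i^*}}$ consistent across queries. That is sounder than the paper's uniformly random tags, which are distinguishable from real ones.
\end{itemize}
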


\begin{proof}
If a PPT adversary $\mathcal{A}$ can break the unforgeability of $\Pi_{\mathcal{R},\Sigma}$ in the aEUF-CMA model with insider corruption, then one is able to construct a simulator $\mathcal{B}$ that, using $\mathcal{A}$, can in turn violate the DL assumption. The construction of $\mathcal{B}$ proceeds as follows. Given the system parameter $pp=(p,\mathbb{G},g,h,H)$, a statement $\boldsymbol{W}$ of the hard relation, and 
a DL problem instance $(g,g^\alpha)$, $\mathcal{B}$ 
will outupt $\alpha$ with non-negligible probability.

 $\mathsf{Setup}$. $\mathcal{B}$ initializes the corrupted party set 
  $\boldsymbol{F}=\emptyset$ and the signature query set $\boldsymbol{Q}=\emptyset$,
  and sends $pp=(p,\mathbb{G},g,h)$ to $\mathcal{A}$.

   $\mathsf{KeyGen}$. $\mathcal{B}$ picks a random index $i^*\in [1,q_k]$. 
  $\mathcal{B}$ runs $\mathbf{KeyGen}(pp) \to (pk_i,sk_i)$, 
  for $i \in [1,q_k],i\ne i^*$, sets $pk_{i^*}=g^\alpha$. 
  By doing this, $\mathcal{B}$ implicitly defines $sk_{i^*}=\alpha$.
  $\mathcal{B}$ sets $\boldsymbol{P}=(pk_1,pk_2,\dots,pk_{q_k})$, 
  and sends $\boldsymbol{P}$ and $\boldsymbol{W}$ to $\mathcal{A}$.

 $\mathsf{Query}$. The procedure for $\mathcal{B}$ to answer the adversary's oracle queries is detailed below.
  \begin{itemize}[]
	\item $\mathcal{HO}(\{0,1\}^*)$. $\mathcal{B}$ simulates the hash function $H$
	as a random oralce.
  	\item $\mathcal{CO}(i)$. If $i=i^*$, $\mathcal{B}$ declares failure and aborts,  
  	otherwise, $\mathcal{B}$ responds private key $sk_i$.
  	\item $\mathcal{PSO}(\boldsymbol{PK},\boldsymbol{S},m,\boldsymbol{W})$. 
  	The algorithm $\mathcal{B}$ first checks the set $\boldsymbol{S}$. Should the indices within $\boldsymbol{S}$ be non-contiguous, it outputs $\bot$.
  	In the case where $i^* \notin \boldsymbol{S}$, $\mathcal{B}$ executes the $\mathbf{PreSign}$ algorithm faithfully. 
	Conversely, if $i^* \in \boldsymbol{S}$, $\mathcal{B}$ generates the pre-signature with a modification: the procedure is identical to $\mathbf{PreSign}$ except for the following. $\mathcal{B}$ selects random elements: $c_i \in \mathbb{Z}_p$ for each $i \in [0,n-1]$, $tag_j \in \mathbb{G}_p$ for each $j \in [0,|\boldsymbol{S}|-1]$, and $\tilde{z} \in \mathbb{Z}_p$. It then computes $R=g^{\tilde{z}}\cdot W_1 \cdot
	\prod_{i=0}^{n-1}{y_{i}^{c_i}}$, $T=h^{\tilde{z}}\cdot W_2 \cdot \prod_{i=0}^{n-1}{l^{c_i}}$. 
	Subsequently, $\mathcal{B}$ programs the random oracle $\mathcal{HO}$ so that the challenge value satisfies $c=H(\boldsymbol{PK},R,T,m) = \sum_{i=0}^{n-1}{c_i}$.
	If this specific value for $H$ has already been defined in $\mathcal{HO}$, $\mathcal{B}$ aborts the simulation due to failure.
	Finally, $\mathcal{B}$ updates the query set as $\boldsymbol{Q} \leftarrow \boldsymbol{Q} \cup \{(\boldsymbol{PK},\boldsymbol{S},m)\}$ and sends the pre-signature tuple $\tilde{\sigma}=(\tilde{z},\boldsymbol{C},\boldsymbol{TAG})$ to $\mathcal{A}$, where $\boldsymbol{C}=(c_0,\dots,c_{n-1})$ and $\boldsymbol{TAG}=(tag_0,\dots,tag_{|\boldsymbol{S}|-1})$.
	\item $\mathcal{SO}(\boldsymbol{PK},\boldsymbol{S},m)$. 
  	The algorithm $\mathcal{B}$ first checks the set $\boldsymbol{S}$. Should the indices within $\boldsymbol{S}$ be non-contiguous, it outputs $\bot$.
  	In the case where $i^* \notin \boldsymbol{S}$, $\mathcal{B}$ executes the $\mathbf{Sign}$ algorithm faithfully. 
	Conversely, if $i^* \in \boldsymbol{S}$, $\mathcal{B}$ generates the signature with a modification: the procedure is identical to $\mathbf{Sign}$ except for the following. $\mathcal{B}$ selects random elements: $c_i\in \mathbb{Z}_p$ for each $i\in [0,n-1]$,
	$tag_j\in \mathbb{G}_p$ for each $j\in [0,|\boldsymbol{S}|-1]$, and $z\in \mathbb{Z}_p$. It then computes $R=g^{z}\cdot \prod_{i=0}^{n-1}{y_{i}^{c_i}}$, $T=h^{z}\cdot \prod_{i=0}^{n-1}{l^{c_i}}$.
	Subsequently, $\mathcal{B}$ programs the random oracle $\mathcal{HO}$ so that the challenge value satisfies $c=H(\boldsymbol{PK},R,T,m) = \sum_{i=0}^{n-1}{c_i}$.
	If this specific value for $H$ has already been defined in $\mathcal{HO}$, $\mathcal{B}$ aborts the simulation due to failure.
	Finally, $\mathcal{B}$ updates the query set as $\boldsymbol{Q} \leftarrow \boldsymbol{Q} \cup \{(\boldsymbol{PK},\boldsymbol{S},m)\}$ and sends the signature tuple $\sigma=(z,\boldsymbol{C},\boldsymbol{TAG})$ to $\mathcal{A}$, where $\boldsymbol{C}=(c_0,\dots,c_{n-1})$ and $\boldsymbol{TAG}=(tag_0,\dots,tag_{|\boldsymbol{S}|-1})$.
  \end{itemize}

 $\mathsf{Forgery}$. $\mathcal{A}$ sends the target message
  $\left(\boldsymbol{PK}^*,\boldsymbol{S}^*,m^*\right)$ to $\mathcal{B}$. $\mathcal{B}$ 
  runs $\mathcal{PSO}(\boldsymbol{PK}^*,\boldsymbol{S}^*,m^*,\boldsymbol{W})\to \tilde{\sigma}^*$, 
  then sends $\tilde{\sigma}^*$ to $\mathcal{A}$. At the end, 
  $\mathcal{A}$ forges a signature $\sigma^*=(z^*,\boldsymbol{C}^*=\{c_0^*,\dots,c_{n-1}^*\},*)$ 
  on $(\boldsymbol{PK}^*,m^*)$ with the following conditions:
  \begin{itemize}[]
	\item $\boldsymbol{PK}^*\subseteq \boldsymbol{P}$ and
	$\left| \boldsymbol{PK}^* \right|=n$,
	\item $|\boldsymbol{PK}^*\cap \boldsymbol{F}|\le t-1$,
	\item $(\boldsymbol{PK}^*,*,m^*) \notin \boldsymbol{Q}$,
	\item $\mathbf{Verify}(pp,\boldsymbol{PK}^*,\sigma^*,t^*,m^*)=1$.
  \end{itemize}
  If $pk_{i^*}\notin \boldsymbol{PK}^*$, $\mathcal{B}$ declares failure and aborts. 
  Otherwise, $\mathcal{B}$ computes $\alpha$ as follows. 

   $\mathcal{B}$ computes $R^*=g^{z^*}\cdot \prod_{i=0}^{n-1}{y_{i}^{c_i^*}}$ as described in $\mathbf{Verify}$ algorithm. 
  Subsequently, $\mathcal{B}$ manipulates the random-oracle replay until the tuple $(\boldsymbol{PK}^*,R^*,T^*,m^*)$ is submitted to $H$. At that moment, $\mathcal{B}$ programs the oracle to output a distinct challenge $c'$, leveraging the forking lemma \cite{bellare2006multi}.  
  $\mathcal{A}$ returns another signature $\sigma'=(z',\boldsymbol{C}'=\{c_0',\dots,c_{n-1}'\},*)$. 
  Given the validity of both $\sigma^*$ and $\sigma'$, the following equality holds:  
  \[
  	R^*=g^{z^*}\cdot \prod_{i=0}^{n-1}{y_{i}^{c_i^*}}=g^{z'}\cdot \prod_{i=0}^{n-1}{y_{i}^{c_i'}}.
  \]
  Note that it is impossible to have $c_i^*=c_i'$ for all $i\in [0,n-1]$ (since 
  $\sum_{i=0}^{n-1}{c_i^*}\ne \sum_{i=0}^{n-1}{c_i'}$). If $c_{i^*}^*=c_{i^*}'$, 
  $\mathcal{B}$ declares failure and aborts. With probability at least $1/n$, we have
  $c_{i^*}^*\ne c_{i^*}'$. For ease of reading, we let 
  $\sum_{i}=d\cdot \sum_{k=i}^{i+t-1}{sk_k}$. Therefore, 
  \begin{align*}
	g^{z^*}\cdot \prod_{i=0}^{n-1}{y_{i}^{c_i^*}}
	&= g^{z^*}\cdot \prod_{i=0}^{n-1}{\left( \prod_{k=i}^{i+t-1}{pk_{k}^{d}} \right) ^{c_{i}^{*}}}\\
	&= g^{z^*}\cdot \prod_{i=0}^{n-1}{\left( g^{c_{i}^{*}\cdot \sum_i} \right)}\\
	&= g^{z^*}\cdot g^{\sum_{i=0}^{n-1}{\left( c_{i}^{*}\cdot \sum_i \right)}}\\
	&= g^{z^*+\sum_{i=0,i\ne i^*}^{n-1}{\left( c_{i}^{*}\cdot \sum_i \right)}}\cdot g^{c_{i^*}^{*}\cdot \left( \alpha \cdot d+\sum_{i^*+1}{} \right)}.\\
  \end{align*}
Similarly, we have 
$g^{z'}\cdot \prod_{i=0}^{n-1}{y_{i}^{c_i'}}=g^{z'+\sum_{i=0,i\ne i^*}^{n-1}{\left( c_{i}'\cdot \sum_i \right)}}\cdot g^{c_{i^*}'\cdot \left( \alpha \cdot d+\sum_{i^*+1}{} \right)}$. 
Then $\mathcal{B}$ can get the discrete logarithm $\alpha$ of $g^\alpha$ by computing that: 
\[
  \alpha=\frac{z^*-z'}{(c_{i^*}^*-c_{i^*}')\cdot d}-\sum_{k=i^*+1}^{i^*+t-1}{sk_k},
\]
as a break of the DL assumption. 

 $\mathsf{Probability\ Analysis}$. We assess the probability that the simulation does not abort. 
  \begin{itemize}[]
		\item Considering $q_c$ queries made to the $\mathcal{CO}$ oracle, the success probability in the first query is calculated as $(1-\frac{1}{q_k})$. For the second query, it is at least $(1-\frac{1}{q_k-1})$. Extending this to $q_c$ queries, the cumulative success probability is given by $(1-\frac{1}{q_k})(1-\frac{1}{q_k-1})\cdots(1-\frac{1}{q_k-q_c+1})=1-\frac{q_c}{q_k}$. 
	\item Consider $q_{ps}$ queries submitted to the $\mathcal{PSO}$ oracle. The success probability for the very first such query is $(1-\frac{q_h}{p})$, where $q_h$ counts queries previously made to the $\mathcal{HO}$ oracle. Extending this to all $q_{ps}$ queries, the lower bound for the overall success probability is:
	\[
  \begin{array}{c}
    \left( 1-\frac{q_h}{p} \right)\left( 1-\frac{q_h+1}{p} \right)\dots 
		\left( 1-\frac{q_h+q_{ps}-1}{p} \right) \\ \ge 1-\frac{q_{ps}(q_h+q_{ps}-1)}{p}.
  \end{array}
  \]
	\item A parallel analysis applies to the $\mathcal{SO}$ oracle. For $q_s$ queries, the success probability is no less than $1-\frac{q_{s}(q_h+q_{s}-1)}{p}$.
	\item Suppose $n_a$ is the number of uncorrupted parties in $\boldsymbol{PK}^*$, 
	here $n-t+1\le n_a\le n$. The probability of $pk_{i^*}\notin \boldsymbol{PK}^*$ in the
	forgery phase is $(1-\frac{1}{q_k-q_c})(1-\frac{1}{q_k-q_c-1})\dots 
	(1-\frac{1}{q_k-q_c-n_a+1})=\frac{q_k-q_c-n_a}{q_k-q_c}\le \frac{q_k-q_c-(n-t+1)}{q_k-q_c}$. 
	Then the probability of $pk_{i^*}\in \boldsymbol{PK}^*$ is at least $\frac{n-t+1}{q_k-q_c}$. 
  \end{itemize}
  
 Given that $\mathcal{A}$ forges a signature with probability $\epsilon(\lambda)$, we have that $\mathcal{B}$ avoids abortion prior to rewinding with the following probability:
\begin{align*}
	\epsilon_b(\lambda)& \ge \epsilon(\lambda) \left( 1-\frac{q_c}{q_k} \right) \left( 1-\frac{q_{ps}(q_h+q_{ps}-1)}{p} \right) \\&\quad \left( 1-\frac{q_{s}(q_h+q_{s}-1)}{p} \right) \left(\frac{n-t+1}{q_k-q_c}\right)\\
	&= \epsilon(\lambda) \left(\frac{n-t+1}{q_k} \right) \left( 1-\frac{q_{ps}(q_h+q_{ps}-1)}{p} \right) \\&\quad \left( 1-\frac{q_{s}(q_h+q_{s}-1)}{p} \right).
\end{align*}
	
The generalized forking lemma \cite{bagherzandi2008multisignatures} guarantees a success probability of at least $\frac{\epsilon_b(\lambda)}{8}$ for the rewinding procedure, provided that $p>8nq_h/\epsilon_b(\lambda)$. (The time required for this simulation is $\tau\cdot 8n^2q_h/\epsilon_b(\lambda) \cdot \ln(8n/\epsilon_b(\lambda))$ when $\mathcal{A}$ runs in time $\tau$.) We have  $c_{i^*}^*\ne c_{i^*}'$ with probability 
at least $1/n$, so the probability $\epsilon'(\lambda)$ of $\mathcal{B}$ successfully 
breaking the DL assumption is:
\begin{align*}
   \epsilon'(\lambda)&\ge \frac{\epsilon_b(\lambda)}{8n}\\&= \frac{\epsilon(\lambda)}{8n} \left(\frac{n-t+1}{q_k} \right)  
	 \left( 1-\frac{q_{ps}(q_h+q_{ps}-1)}{p} \right)\\&\quad \left( 1-\frac{q_{s}(q_h+q_{s}-1)}{p} \right),
\end{align*}
if $p>q_{ps}(q_h+q_{ps}-1)$, $p>q_{s}(q_h+q_{s}-1)$, and $p>8nq_h/\epsilon_b(\lambda)$. If we 
take $p>q_{ps}(q_h+q_{ps}-1)+q_{s}(q_h+q_{s}-1)$, the probability $\epsilon'(\lambda)$ can be 
further simplified to $\epsilon'(\lambda) \ge \frac{n-t+1}{32nq_k}\epsilon(\lambda)$.
\end{proof}

\begin{mytheorem}
	We state that the proposed linkable $(t,n)$-threshold ring adaptor signature scheme $\Pi_{\mathcal{R},\Sigma}$ achieves pre-signature anonymity, subject to the constraints $p>q_{ps}(q_h+q_{ps}-1)$ and $p>q_{s}(q_h+q_{s}-1)$. Here, $q_h$, $q_s$, and $q_{ps}$
	are parameters representing the counts of hash queries, signature queries, and pre-signature queries, respectively.
\end{mytheorem}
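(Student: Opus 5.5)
We follow the simulation strategy of the proof of Theorem~\ref{thm:unforgeability}. We build a challenger $\mathcal{B}$ that answers every query of $\mathcal{A}$ in the experiment $\mathbf{aAnon}_{\mathcal{A},\Pi_{\mathcal{R},\Sigma}}$ with the hash function $H$ modelled as a random oracle $\mathcal{HO}$. The challenge pre-signature is then produced without using any secret key from $\boldsymbol{S}_0^*$ or $\boldsymbol{S}_1^*$. Its distribution will therefore be independent of the bit $b$, so $\mathcal{A}$'s view carries no information about $b$ beyond what the abort events leak.

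\textbf{Setup and key generation.} $\mathcal{B}$ generates $pp$ and $(\boldsymbol{W},w)$ honestly and runs $\mathbf{KeyGen}$ for all $q_k$ users, so it knows every secret key. Hash queries are answered with fresh uniform values and recorded in a table.

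\textbf{Oracle queries.} Since $\mathcal{B}$ knows all keys, it answers $\mathcal{SO}$ and $\mathcal{PSO}$ honestly. To keep the argument uniform, we instead answer them with the same programmed simulation used in Theorem~\ref{thm:unforgeability}:
\begin{itemize}
\item sample $c_0,\dots,c_{n-1}$ and $\tilde z$ (or $z$) uniformly;
\item set the tags to their true values $tag_i=h^{sk_{j+i}}$;
\item compute $R=g^{\tilde z}W_1\prod_i y_i^{c_i}$ and $T=h^{\tilde z}W_2\prod_i l^{c_i}$;
\item program $H(\boldsymbol{PK},R,T,m)=\sum_i c_i$, aborting if that input is already defined.
\end{itemize}
For a real pre-signature, $r$ is uniform and so is each $c_i$ with $i\ne j$. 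Since $c$ is a fresh random-oracle output, $c_j$ is uniform as well, and $\tilde z$ is then uniform and determined. Hence the real and simulated transcripts have the same distribution, conditioned on no abort. Bounding aborts exactly as in Theorem~\ref{thm:unforgeability} gives a total failure probability of at most $q_{ps}(q_h+q_{ps}-1)/p+q_s(q_h+q_s-1)/p$. This is below $1$, and negligible, under the stated constraints on $p$.

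\textbf{Challenge.} On input $(\boldsymbol{PK}^*,\boldsymbol{S}_0^*,\boldsymbol{S}_1^*,m^*,\boldsymbol{W}^*)$, $\mathcal{B}$ picks $b$ and outputs a simulated pre-signature. The components $(\tilde z,\boldsymbol{C})$ are uniform subject to the programmed hash equation, and this distribution depends only on $\boldsymbol{PK}^*$, $m^*$, $\boldsymbol{W}^*$ and $l$, not on which window $j$ is the signer's. The window position itself is hidden because every $c_i$, including $c_j$, is marginally uniform and the verification equation is symmetric in $i$. We then conclude $|\Pr[b'=b]-\tfrac12|\le \Pr[\text{abort}]$.

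\textbf{Main obstacle: the tag vector.} The challenge output contains $\boldsymbol{TAG}=(h^{sk_k})_{k\in \boldsymbol{S}_b^*}$. For two distinct sets, $\mathcal{A}$ could compare these against $h^{sk}$ values it computes from corrupted keys or from earlier signatures. The restriction $(*,\boldsymbol{S}_\beta^*,*)\notin\boldsymbol{Q}$ rules out direct replay of tags from signing queries, and the aAnon experiment has no corruption oracle. We would therefore argue that $\{h^{sk_k}\}$ for uncorrupted, unqueried keys is indistinguishable from random given the $g^{sk_k}$. This is a DDH-style hybrid: replace the challenge tags by random group elements, as the simulator in Theorem~\ref{thm:unforgeability} already does, which makes the challenge independent of $b$. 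The cost is an additive DDH advantage per tag. If the intended proof treats $h$ with unknown $\log_g h$ and relies only on the random-oracle programming, this step is where we expect the argument to be weakest and to need the DDH reduction made explicit.
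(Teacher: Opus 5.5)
You follow the same route as the paper: a programmed random oracle, and a challenge built from uniform $\tilde z$ and $\boldsymbol{C}$ together with random tags. You also correctly isolate the step the paper passes over. The paper simply substitutes random $tag_j\in\mathbb{G}_p$ and asserts that $b$ is unused. Replacing $h^{sk_k}$ by random elements is not free: it needs DDH with $\log_g h$ unknown, which is not among the theorem's hypotheses. So the claimed \emph{perfect} anonymity cannot hold as stated.

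However, your treatment of that step has a genuine gap. You claim that the condition $(*,\boldsymbol{S}_\beta^*,*)\notin\boldsymbol{Q}$ rules out replay of tags from signing queries. That is false, because it only forbids the \emph{identical index set}. Take $t=2$, $\boldsymbol{S}_0^*=\{0,1\}$ and $\boldsymbol{S}_1^*=\{2,3\}$.
\begin{enumerate}
\item The adversary first makes the allowed query $\mathcal{PSO}(\boldsymbol{PK}^*,\{1,2\},m,\boldsymbol{W})$. It receives $\boldsymbol{TAG}=(h^{sk_1},h^{sk_2})$, with indices relative to $\boldsymbol{PK}^*$.
\item It outputs $b'=0$ if and only if $h^{sk_1}$ occurs in the challenge tag vector. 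This is just running $\mathbf{Link}$, and it wins with probability $1$.
\end{enumerate}
The same works for any $t$: query a reordering of $\boldsymbol{PK}^*$ that places the keys of $\boldsymbol{S}_0^*$ in a different window.

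No DDH hybrid can repair this. The tag is a deterministic function of $sk_k$ that is already in $\mathcal{A}$'s view. Replacing it only in the challenge is detectable, and replacing it consistently in every oracle answer preserves the match. So the statement is not provable for the experiment as written; the paper's proof hides exactly this by inserting fresh random tags.

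To obtain a true theorem along your lines, change three things.
\begin{enumerate}
\item Strengthen freshness to a condition on \emph{keys}. No $\mathcal{SO}$ or $\mathcal{PSO}$ query may use a signing set containing any public key of $\boldsymbol{PK}^*$ indexed by $\boldsymbol{S}_0^*\cup\boldsymbol{S}_1^*$. It must be the union, not the symmetric difference, because the ordering of $\boldsymbol{TAG}$ reveals the window position of a shared key whose tag is known.
\item Add DDH on $(g,h)$ to the hypotheses, with $h$ generated so that $\log_g h$ is unknown. Your hybrid then handles all keys at once. From one instance $(g,h,X,Z)$, set $pk_k=X^{a_k}g^{b_k}$ and $tag_k=Z^{a_k}h^{b_k}$ with random $a_k,b_k$. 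Answer every signing query by oracle programming, which needs only the tags, not the secret keys. In the random-$Z$ world the challenge tags are fresh uniform elements, so the challenge is independent of $b$.
\item State the bound as the abort probability plus the DDH advantage. The resulting anonymity is computational, not perfect.
\end{enumerate}
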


\begin{proof}
We demonstrate the constructibility of a simulator $\mathcal{B}$ which furnishes perfect anonymity in the random oracle model, based on the system parameters $pp=\left(p,\mathbb{G},g,\right.$ $\left.\\h,H\right)$ and a given statement $\boldsymbol{W}$ of the hard relation.

 $\mathsf{Setup}$. $\mathcal{B}$ initializes the signature query set $\boldsymbol{Q}=\emptyset$, 
  and sends $pp=(p,\mathbb{G},g,h)$ to $\mathcal{A}$.

 $\mathsf{KeyGen}$. $\mathcal{B}$ runs $\mathbf{KeyGen}(pp) \to (pk_i,sk_i)$, 
  for $i \in [1,q_k]$, and sets $\boldsymbol{P}=(pk_1,pk_2,\dots,pk_{q_k})$, 
  and sends $\boldsymbol{P}$ and $\boldsymbol{W}$ to $\mathcal{A}$.

 $\mathsf{Query}$. The procedure for $\mathcal{B}$ to answer the adversary's oracle queries is consistent with the responses to the $\mathcal{HO}$, $\mathcal{PSO}$ and $\mathcal{SO}$ oracles in the proof of Theorem~\ref{thm:unforgeability}.

$\mathsf{Challenge}$. $\mathcal{A}$ sends the target message 
	$\left(\boldsymbol{PK}^*,\boldsymbol{S}_0^*,\boldsymbol{S}_1^*,\right.$ $\left.\\m^*,\boldsymbol{W}^*=(W_1^*,W_2^*)\right)$ to $\mathcal{B}$, where $\boldsymbol{PK}^*\subseteq \boldsymbol{P}$, $|\boldsymbol{PK}^*|=n$, $|\boldsymbol{S}_0^*|=|\boldsymbol{S}_1^*|=t$, 
	$\boldsymbol{S}_0^*\ne \boldsymbol{S}_1^*$, $(*,\boldsymbol{S}_0^*,*) \notin \boldsymbol{Q}$, 
	and $(*,\boldsymbol{S}_1^*,*) \notin \boldsymbol{Q}$. $\mathcal{B}$ generates the pre-signature with a modification: the procedure is identical to $\mathbf{PreSign}$, except for the following. $\mathcal{B}$ selects random $c_i\in \mathbb{Z}_p$ for each $i\in [0,n-1]$, $tag_j\in \mathbb{G}_p$ for each $j\in [0,t-1]$, and $\tilde{z}\in \mathbb{Z}_p$. It then computes $R=g^{\tilde{z}}\cdot W_1 \cdot \prod_{i=0}^{n-1}{y_{i}^{c_i}}$, 
	$T=h^{\tilde{z}}\cdot W_2 \cdot \prod_{i=0}^{n-1}{l^{c_i}}$. 
	Subsequently, $\mathcal{B}$ programs the random oracle $\mathcal{HO}$ so that the challenge value satisfies $c=H(\boldsymbol{PK},R,T,m)=\sum_{i=0}^{n-1}{c_i}$.
	If this specific value for $H$ has already been defined in $\mathcal{HO}$, $\mathcal{B}$ aborts the simulation due to fail. 
	Finally, $\mathcal{B}$ sends $\tilde{\sigma}^*=(\tilde{z},\boldsymbol{C},\boldsymbol{TAG})$ 
	to $\mathcal{A}$, where $\boldsymbol{C}=(c_0,\dots,c_{n-1})$ and 
	$\boldsymbol{TAG}=(tag_0,\dots,tag_{t-1})$. In the end, $\mathcal{B}$ samples a random bit
	$b\in \{0,1\}$. 

 $\mathsf{Output}$. $\mathcal{A}$ returns a bit $b' \in \{0,1\}$. 
	$\boldsymbol{S_0^*}$ and $\boldsymbol{S_1^*}$ is not used in 
	the generation of $\tilde{\sigma}^*$ (i.e., the bit $b$ is not used in generation 
	of $\tilde{\sigma}^*$). Thus, the probability of the adversary's success does not exceed $1/2$.

 $\mathsf{Probability\ Analysis}$. We assess the probability that the simulation does not abort. 
	\begin{itemize}[]
		\item Consider $q_{ps}$ queries submitted to the $\mathcal{PSO}$ oracle. The success probability for the very first such query is $(1-\frac{q_h}{p})$, where $q_h$ counts queries previously made to the $\mathcal{HO}$ oracle. Extending this to all $q_{ps}$ queries, the lower bound for the overall success probability is:
	\[
  \begin{array}{c}
    \left( 1-\frac{q_h}{p} \right)\left( 1-\frac{q_h+1}{p} \right)\dots 
		\left( 1-\frac{q_h+q_{ps}-1}{p} \right) \\ \ge 1-\frac{q_{ps}(q_h+q_{ps}-1)}{p}.
  \end{array}
  \]
	\item A parallel analysis applies to the $\mathcal{SO}$ oracle. For $q_s$ queries, the success probability is no less than $1-\frac{q_{s}(q_h+q_{s}-1)}{p}$.
		\item In the challenge phase, the probability that the 
		input to the $\mathcal{HO}$ has not been queried before is 
		$1-\frac{q_{h}+q_{ps}+q_{s}}{p}$.
	\end{itemize}

 In the above experiment, the probability of $\mathcal{B}$ does not abort 
is at least: $(1-\frac{q_{ps}(q_h+q_{ps}-1)}{p})(1-\frac{q_{s}(q_h+q_{s}-1)}{p})(1-\frac{q_{h}+q_{ps}+q_{s}}{p})$, 
if $p>q_{ps}(q_h+q_{ps}-1)$, and $p>q_{s}(q_h+q_{s}-1)$. If $\mathcal{B}$ does not abort, 
then $\left|\text{Pr}\left[\mathbf{aAnon}_{\mathcal{A},\Pi _{\mathcal{R},\Sigma}}
	\left(\lambda  \right) =1  \right] -\frac{1}{2}\right| \le \epsilon(\lambda)$.
\end{proof}

\begin{mytheorem}
	The linkable $(t,n)$-threshold ring adaptor signature scheme $\Pi_{\mathcal{R},\Sigma}$ guarantees $\left(\epsilon(\lambda),q_h, q_{ps},q_s\right)$-linkability with respect to insider corruption, subject to the constraints that the $\epsilon'(\lambda)$-discrete logarithm assumption holds on $\mathbb{G}_p$, that $p>q_{ps}(q_h+q_{ps}-1)+q_{s}(q_h+q_{s}-1)$, and that $\epsilon'(\lambda) \ge \frac{2n-2t+1}{32nq_k}\epsilon(\lambda)$, where the quantities $q_h$, $q_s$, and $q_{ps}$ represent the numbers of hash, signature, and pre-signature queries.
\end{mytheorem}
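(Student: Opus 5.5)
We reduce linkability to the DL assumption by reusing the simulator of Theorem~\ref{thm:unforgeability} almost verbatim. Given a DL instance $(g,g^\alpha)$, $\mathcal{B}$ picks a random index $i^*\in[1,q_k]$, sets $pk_{i^*}=g^\alpha$, and generates all other key pairs honestly. It answers $\mathcal{HO}$ by lazy sampling and answers $\mathcal{CO}$ honestly, aborting if $i=i^*$. For $\mathcal{SO}$ and $\mathcal{PSO}$ queries with $i^*\in\boldsymbol{S}$, it uses the same programmed-oracle simulation as before: random $\boldsymbol{C}$, random tags, random $z$ or $\tilde z$, then program $H(\boldsymbol{PK},R,T,m)=\sum_i c_i$. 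The abort probabilities of these simulations are bounded exactly as in Theorem~\ref{thm:unforgeability}, giving the factors $1-\frac{q_{ps}(q_h+q_{ps}-1)}{p}$ and $1-\frac{q_s(q_h+q_s-1)}{p}$.

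The central step is a counting argument on the winning output $(\boldsymbol{PK}_1^*,\sigma_1^*,m_1^*),(\boldsymbol{PK}_2^*,\sigma_2^*,m_2^*)$. Each valid signature carries $t$ tags. $\mathbf{Link}=0$ means the $2t$ tags are pairwise distinct across the two signatures. The adversary controls at most $2t-1$ corrupted keys in $\boldsymbol{PK}_1^*\cup\boldsymbol{PK}_2^*$, and each corrupted key $sk$ yields only the single honest tag $h^{sk}$. Hence at least one of the two signatures, say $\sigma_b^*$, must be produced without knowledge of at least one secret key in its ring's signing window. Equivalently, it is a signature that is effectively unforgeable-type on $\boldsymbol{PK}_b^*$ with fewer than $t$ usable corrupted keys.

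$\mathcal{B}$ guesses $b\in\{1,2\}$ and requires $pk_{i^*}$ to lie among the uncorrupted keys of $\boldsymbol{PK}_b^*$. The number of uncorrupted keys in the union is at least $2n-(2t-1)=2n-2t+1$ out of the remaining keys, which replaces the factor $n-t+1$ of Theorem~\ref{thm:unforgeability} by $2n-2t+1$. $\mathcal{B}$ then rewinds $\mathcal{A}$ at the hash query $(\boldsymbol{PK}_b^*,R_b^*,T_b^*,m_b^*)$ using the generalized forking lemma~\cite{bagherzandi2008multisignatures}. This yields two accepting transcripts with equal $R_b^*$ but different challenge sums, so some $c_i$ differs, and with probability at least $1/n$ it is the index of a window containing $i^*$. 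Solving the two $g$-equations exactly as in the unforgeability proof gives
\[
\alpha=\frac{z^*-z'}{(c^*_{i^*}-c'_{i^*})d}-\sum_{k\ne i^*}sk_k,
\]
where the subtracted secrets are known to $\mathcal{B}$.

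Multiplying the factors gives $\epsilon'(\lambda)\ge \frac{\epsilon(\lambda)}{8n}\cdot\frac{2n-2t+1}{q_k}\cdot(1-\cdots)(1-\cdots)$. Under $p>q_{ps}(q_h+q_{ps}-1)+q_s(q_h+q_s-1)$ this simplifies to $\frac{2n-2t+1}{32nq_k}\epsilon(\lambda)$, as stated.

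The main obstacle is making the pigeonhole step rigorous. A signature with distinct tags need not correspond to known discrete logs $\log_h tag_k$, because the tags are arbitrary group elements. What the forking actually certifies is knowledge of $d\sum_k sk_k$ for some window together with consistency $\log_h l=\log_g y_i$. The argument therefore needs both equations, in $g$ and in $h$, from the forked pair, to conclude that the tags of $\sigma_b^*$ must encode some honest key's $sk_{i^*}$ in aggregate. Only after that can the abort analysis proceed. I would first try to extract $\log_h l=d\sum_k sk_k$ from the $T$-equations and argue that distinct tags force an uncorrupted key into the window.
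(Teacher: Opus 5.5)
Your outline follows the paper's proof almost step for step: the simulator of Theorem~\ref{thm:unforgeability}, the pigeonhole ``$2t$ distinct tags, hence $2t$ different private keys'', the factor $2n-2t+1$, and the forking extraction. The step you flag as the main obstacle is exactly the one the paper simply asserts. It cannot be closed, because for this construction it is false, and so the theorem itself fails.

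$\mathbf{Verify}$ never ties an individual $tag_k$ to any public key. The tags are not part of the hash input $H(\boldsymbol{PK},R,T,m)$, and they enter only through $l=\prod_k tag_k^{d}$ in $T''=h^{z}l^{\sum_i c_i}$. So $l$ can be chosen \emph{after} the challenge. Concretely, let $\mathcal{A}$ corrupt $t$ keys and place them at positions $j,\dots,j+t-1$ of a ring $\boldsymbol{PK}$. It signs two messages $m_1\ne m_2$ on that same ring as follows:
\begin{itemize}
\item pick random $r,r'$ and $c_i$ for $i\ne j$, set $R=g^{r}\prod_{i\ne j}y_i^{c_i}$ and $T=h^{r'}$, and obtain $c=H(\boldsymbol{PK},R,T,m)$;
\item set $c_j=c-\sum_{i\ne j}c_i$ and $z=r-c_j d\sum_{k=j}^{j+t-1}sk_k$, so that $R''=R$ exactly as in the correctness proof;
\item only then put $l=h^{(r'-z)/c}$ (for $c\ne 0$), which gives $T''=h^{z}l^{c}=T$, and choose any tags with $\prod_k tag_k^{d}=l$, e.g.\ $tag_1,\dots,tag_{t-1}$ uniform and $tag_0$ fixed by the product.
\end{itemize}
Both signatures verify, and their tags coincide only with negligible probability, so $\mathbf{Link}(\sigma_1,\sigma_2)=0$. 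No signing or pre-signing query was made, and $|\boldsymbol{PK}\cap\boldsymbol{F}|=t\le 2t-1$. Hence $\mathcal{A}$ wins $\mathbf{aLink}$ with overwhelming probability. For $t\ge 2$ it is even simpler: multiply the tags of any valid signature by $h^{s_k}$ with $\sum_k s_k=0$.

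The same observation defeats your proposed repair. The fork is taken at the query $(\boldsymbol{PK},R,T,m)$, which does not fix the tags. The two runs may therefore use $l\ne l'$, and $h^{z^*}l^{c^*}=h^{z'}{l'}^{c'}$ yields nothing. Even if $l$ were fixed, you would only get $\log_h l\cdot\sum_i\delta_i=\sum_i\delta_i\log_g y_i$ with $\delta_i=c_i'-c_i^*$. That is a weighted average over windows, and in any case it constrains only the product of the tags. So your premise that a corrupted key ``yields only the single honest tag $h^{sk}$'' is exactly what the scheme fails to enforce. The statement needs a modified construction (at least hashing $\boldsymbol{TAG}$ and binding each $tag_k$ individually to a ring member's key), not a better proof.

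Separately, the bookkeeping you inherit from the paper is also off:
\begin{itemize}
\item $|\boldsymbol{PK}_1^*\cup\boldsymbol{PK}_2^*|$ can be as small as $n$, so ``at least $2n-2t+1$ uncorrupted keys'' presumes disjoint rings.
\item You need $pk_{i^*}$ in the guessed $\boldsymbol{PK}_b^*$ but count over the union, and the guess of $b$ costs a factor $1/2$ that is absent from your bound.
\item $sk_{i^*}$ occurs in $t$ of the $y_i$, so after forking the coefficient of $\alpha$ is $d\sum_{i:\,i^*\in\{i,\dots,i+t-1\}}(c_i'-c_i^*)$, not $d(c^*_{i^*}-c'_{i^*})$.
\end{itemize}
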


\begin{proof}
The existence of a PPT adversary $\mathcal{A}$ winning the linkability experiment $\mathbf{aLink}_{\mathcal{A},\Pi_{\mathcal{R},\Sigma}}$ enables the construction of a simulator $\mathcal{B}$ that uses $\mathcal{A}$ to break the DL assumption. On receiving the system parameters $pp=(p,\mathbb{G},g,h,H)$, a statement $\boldsymbol{W}$, and a DL instance $(g,g^\alpha)$, $\mathcal{B}$ succeeds with non-negligible probability in thereby computing $\alpha$.

 $\mathsf{Setup}$. $\mathcal{B}$ initializes the corrupted party set 
  $\boldsymbol{F}=\emptyset$ and the signature query set $\boldsymbol{Q}=\emptyset$,
  and sends $pp=(p,\mathbb{G},g,h)$ to $\mathcal{A}$.

   $\mathsf{KeyGen}$. $\mathcal{B}$ picks a random index $i^*\in [1,q_k]$. 
  $\mathcal{B}$ runs $\mathbf{KeyGen}(pp) \to (pk_i,sk_i)$, 
  for $i \in [1,q_k],i\ne i^*$, sets $pk_{i^*}=g^\alpha$. 
  By doing this, $\mathcal{B}$ implicitly defines $sk_{i^*}=\alpha$.
  $\mathcal{B}$ sets $\boldsymbol{P}=(pk_1,pk_2,\dots,pk_{q_k})$, 
  and sends $\boldsymbol{P}$ and $\boldsymbol{W}$ to $\mathcal{A}$.

 $\mathsf{Query}$. The procedure for $\mathcal{B}$ to answer the adversary's oracle queries is consistent with the responses to the $\mathcal{HO}$, $\mathcal{CO}$, $\mathcal{PSO}$ and $\mathcal{SO}$ oracles in the proof of Theorem~\ref{thm:unforgeability}.

 $\mathsf{Challenge}$. $\mathcal{A}$ successfully outputs two signatures $\left(\boldsymbol{PK_i^*},\sigma_i^*,m_i^*\right)$, 
where $\sigma_i^*=(z_i^*,\boldsymbol{C}_i^*,\boldsymbol{TAG}_i^*)$, and $i\in$ $\{1,2\}$, with the following conditions:
	\begin{itemize}[]
		\item $\boldsymbol{PK}_i^*\subseteq \boldsymbol{P}$ and $\left| \boldsymbol{PK}_i^* \right|=n$, for each $i\in \{1,2\}$,
		\item $|(\boldsymbol{PK}_1^*\cup\boldsymbol{PK}_2^*) \cap \boldsymbol{F}|\le 2t-1$,
		\item $(\boldsymbol{PK}_i^*,*,m_i^*) \notin \boldsymbol{Q}$, for each $i\in \{1,2\}$,
		\item $\mathbf{Verify}(pp,\boldsymbol{PK}_i^*,\sigma_i^*,t,m_i^*)=1$, for each $i\in \{1,2\}$, 
		\item $\mathbf{Link}(\sigma_1^*,\sigma_2^*)=0$.
	\end{itemize}
If $pk_{i^*}\notin \boldsymbol{PK}_1^*\cup\boldsymbol{PK}_2^*$, $\mathcal{B}$ declares failure and aborts.
Otherwise, $\mathcal{B}$ computes $\alpha$ as follows.

 Since $\sigma_1^*$ and $\sigma_2^*$ are unlinked, for any $i, j \in [0, t-1]$, the tag $tag_i \in \boldsymbol{TAG_1^*}$ 
is not equal to the tag $tag_j \in \boldsymbol{TAG_2^*}$. This implies that the two signatures collectively generate $2t$ distinct 
tags (i.e., utilize $2t$ different private keys). In the set of public keys $\boldsymbol{PK}_1^*\cup\boldsymbol{PK}_2^*$, 
the adversary $\mathcal{A}$ obtains at most $2t - 1$ private keys.
Given that both signatures are valid $(t, n)$-threshold signatures, it follows that at least one signature 
must be a successful forgery by the adversary. According to Theorem~\ref{thm:unforgeability}, 
the simulator can compute $\alpha$ to break the DL assumption.

$\mathsf{Probability\ Analysis}$. The analysis parallels that of Theorem~\ref{thm:unforgeability}. Assuming $\mathcal{A}$'s forgery success probability is $\epsilon(\lambda)$, and given the condition $p>q_{ps}(q_h+q_{ps}-1)+q_{s}(q_h+q_{s}-1)$ (with $q_h$, $q_s$, $q_{ps}$ representing the numbers of hash, signature, and pre-signature queries), the probability $\epsilon'(\lambda)$ for $\mathcal{B}$ to solve the DL problem is at least $\frac{2n-2t+1}{32nq_k}\epsilon(\lambda)$.
\end{proof}

\section{Performance Evaluation}
\label{sec:performance}
\subsection{Experiment Setup}
The simulation experiment is performed on a laptop with a
1.90 GHz Intel Core i5-1340P CPU, 16GB memory, and Windows
11 operating system. For programming, we use
Java language and the jdk version 1.8 \cite{OracleJavaSE8}.

\subsection{Experiment Evaluation}
Figure \ref{fig:runtime_comparison} illustrates a performance comparison of the proposed LTRAS scheme relative to the baseline scheme of \cite{wang2024anonymity}, which is a linkable dualring adaptor signature based on a schnorr instantiation. The evaluation assesses the computational efficiency of 9 distinct algorithms as the ring size scales from 10 to 100 members, with the signature threshold set to half of the ring size. 

The proposed LTRAS scheme exhibits significantly superior efficiency in the critical and frequently executed algorithms of signature generation and verification. As shown in Figure \ref{fig:PreSign}, \ref{fig:PreVerify}, and \ref{fig:Verify}, the computation cost for these core algorithms in the LTRAS scheme is substantially lower and scales more gracefully with increasing ring size compared to the baseline. This superior scalability is crucial for practical deployment.
Conversely, Figures~\ref{fig:Setup}, \ref{fig:GenR}, and \ref{fig:Ext} reveal that, the LTRAS scheme demands slightly more computation in the $\mathbf{Setup}$, $\mathbf{GenR}$, and $\mathbf{Ext}$ algorithms. However, these algorithms are typically performed only once during initialization or infrequently during the system's lifecycle. 

Figure \ref{fig:KeyGen} reveals that the computation cost for the $\mathbf{KeyGen}$ algorithm of both schemes are closely aligned. In contrast, Figure \ref{fig:Adapt} indicates that the $\mathbf{Adapt}$ algorithm in the LTRAS scheme maintains a constant, negligible computation cost, independent of the ring size, while the baseline scheme incurs a cost that scales with the ring size. As illustrated in Figure \ref{fig:Link}, both schemes demonstrate comparable efficiency in the $\mathbf{Link}$ algorithm. The results for the $\mathbf{Link}$ algorithm in Figure \ref{fig:Link} show that our scheme maintains consistently lower latency compared to the baseline scheme across all tested ring sizes. 

We analyze the communication cost of the proposed scheme relative to \cite{wang2024anonymity}. Let $|\mathbb{G}_p|$ represent the size of a group element in the cyclic group $\mathbb{G}_p$, and $|\mathbb{Z}_p|$ indicate the size of an element in the finite field $\mathbb{Z}_p$. As compiled in Table \ref{tab:communication_cost}, the comparative findings show that although our scheme entails moderately higher expenses in $\mathbf{Setup}$ and $\mathbf{GenR}$, it demonstrates marked efficiency in critical algorithms.
Specifically, the cost of the baseline scheme for $\mathbf{PreSign}$ and $\mathbf{Adapt}$ algorithms scales as $t(n+1)|\mathbb{Z}_p|+t|\mathbb{G}_p|$, growing both with the threshold $t$ and the ring size $n$. In contrast, our scheme reduces this term to $(n+1)|\mathbb{Z}_p|+t|\mathbb{G}_p|$, eliminating the multiplicative factor $t$ from the $\mathbb{Z}_p$ component. This optimization significantly lowers the communication overhead for these frequently executed algorithms, enhancing overall system efficiency.

\begin{figure*}[h!]
    \centering
	\captionsetup[subfigure]{font=small}
    \begin{subfigure}[b]{0.32\textwidth}
        \centering
        \includegraphics[width=\linewidth]{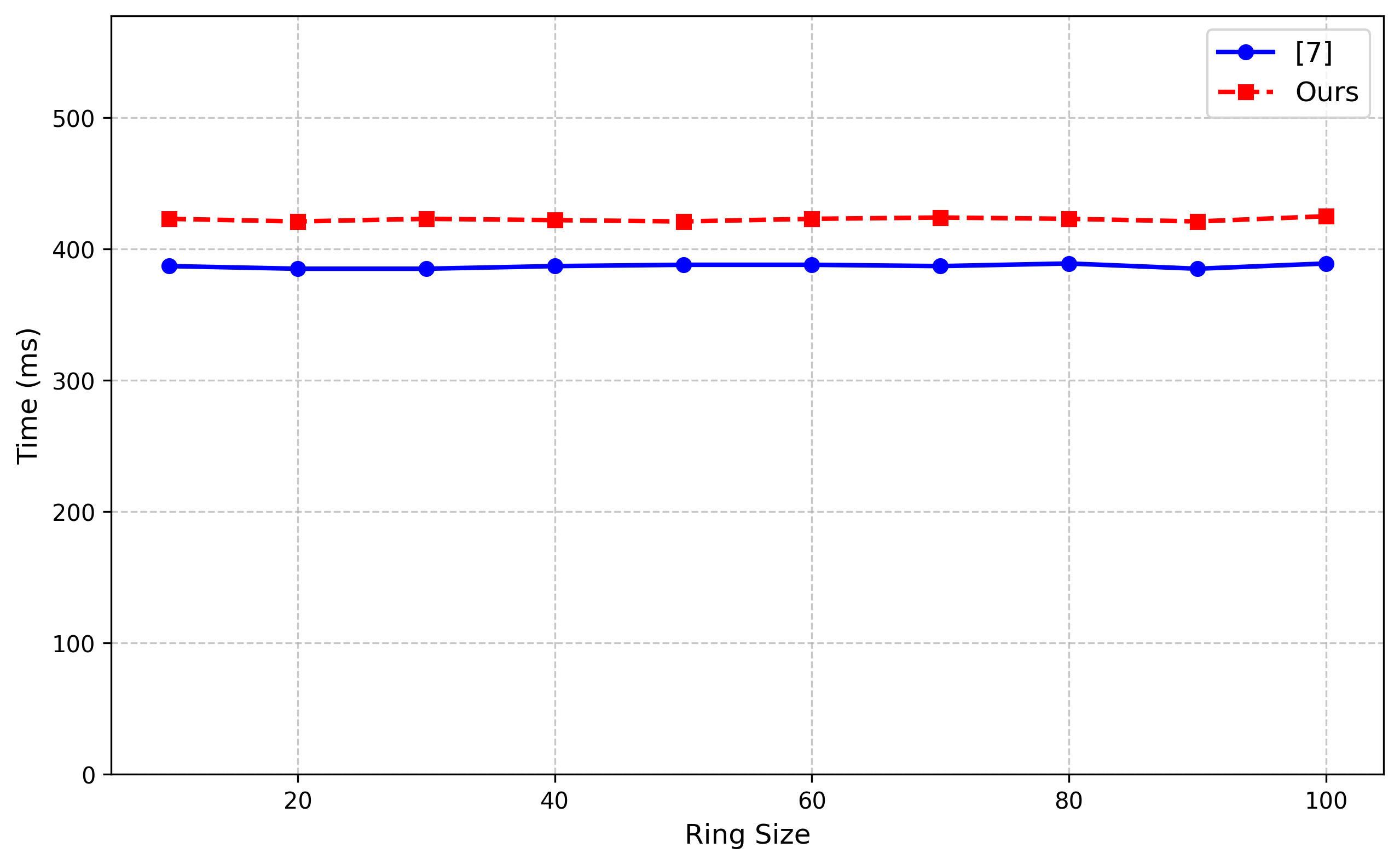}
        \caption{$\mathbf{Setup}$}
        \label{fig:Setup}
    \end{subfigure}
    \hfill
    \begin{subfigure}[b]{0.32\textwidth}
        \centering
        \includegraphics[width=\linewidth]{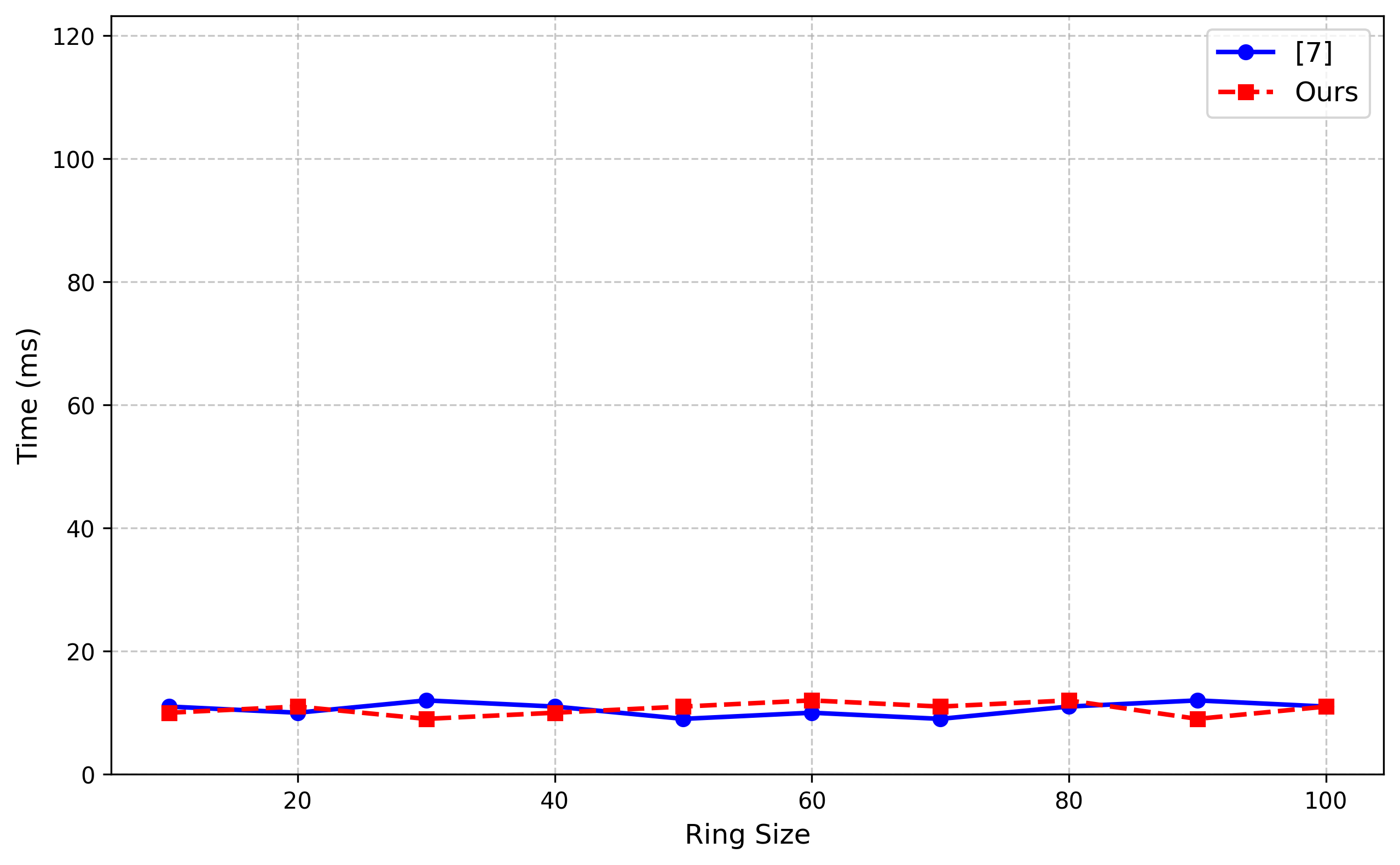}
        \caption{$\mathbf{KeyGen}$}
        \label{fig:KeyGen}
    \end{subfigure}
    \hfill
    \begin{subfigure}[b]{0.32\textwidth}
        \centering
        \includegraphics[width=\linewidth]{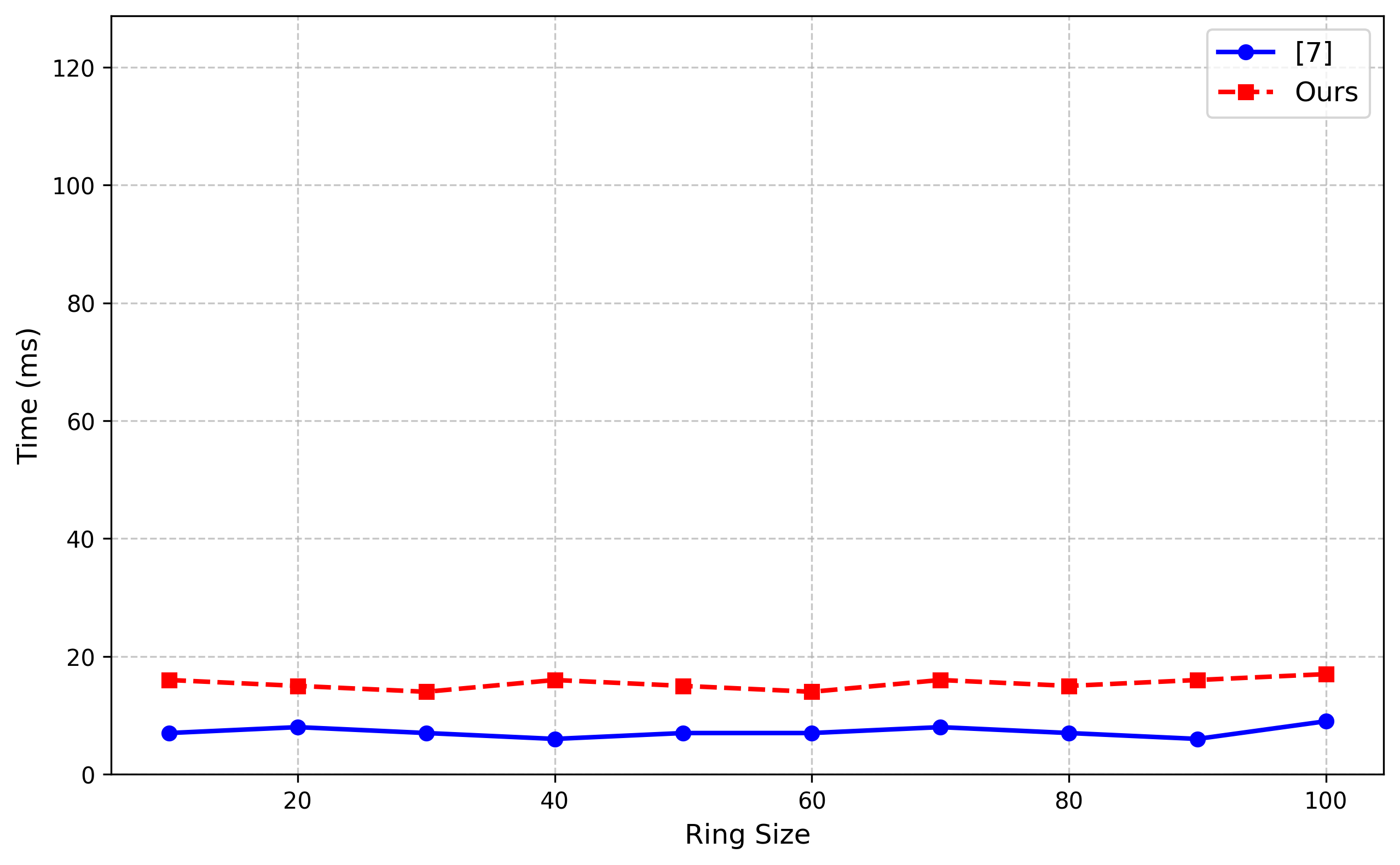}
        \caption{$\mathbf{GenR}$}
        \label{fig:GenR}
    \end{subfigure}

    \begin{subfigure}[b]{0.32\textwidth}
        \centering
        \includegraphics[width=\linewidth]{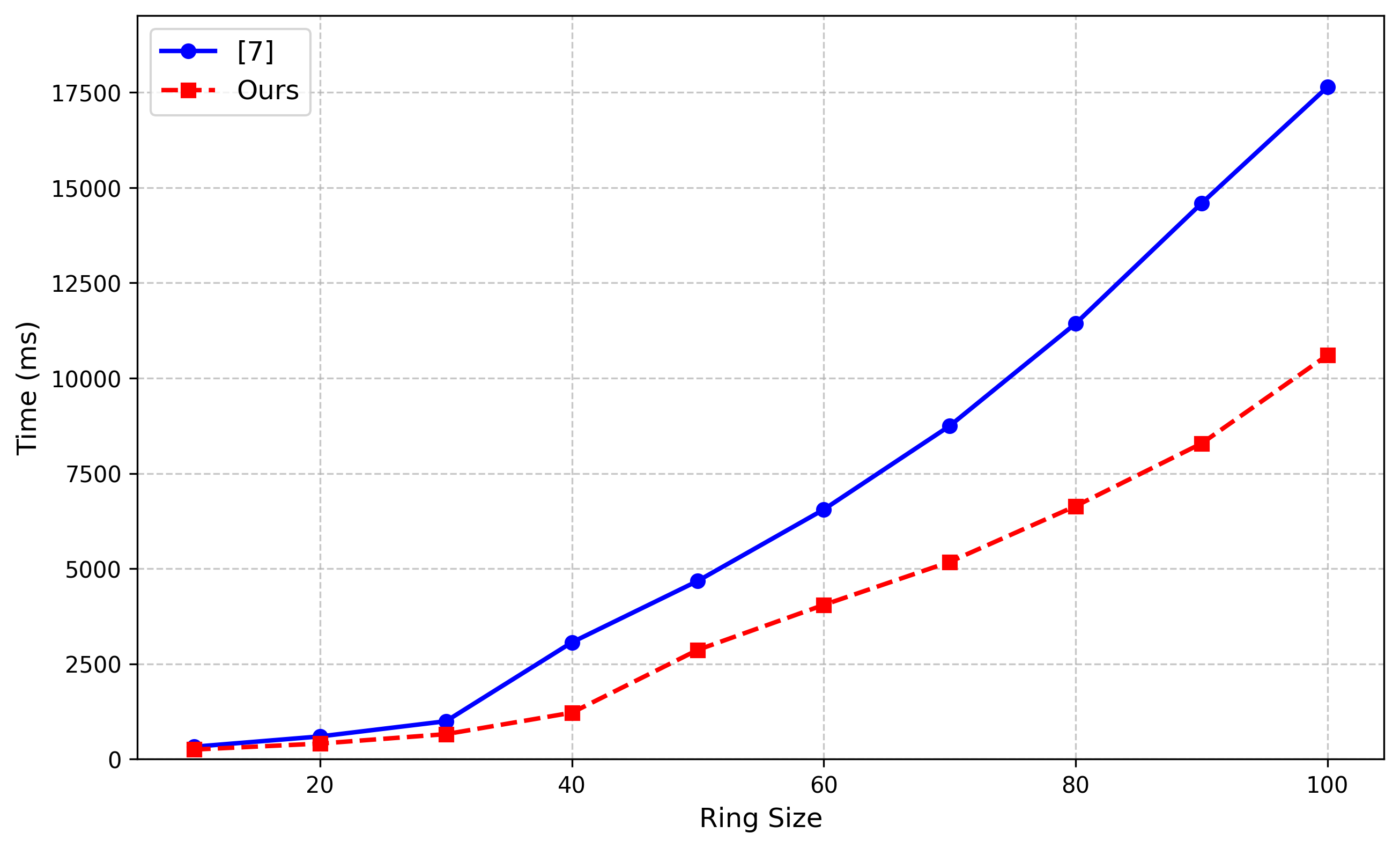}
        \caption{$\mathbf{PreSign}$}
        \label{fig:PreSign}
    \end{subfigure}
    \hfill
	\begin{subfigure}[b]{0.32\textwidth}
        \centering
        \includegraphics[width=\linewidth]{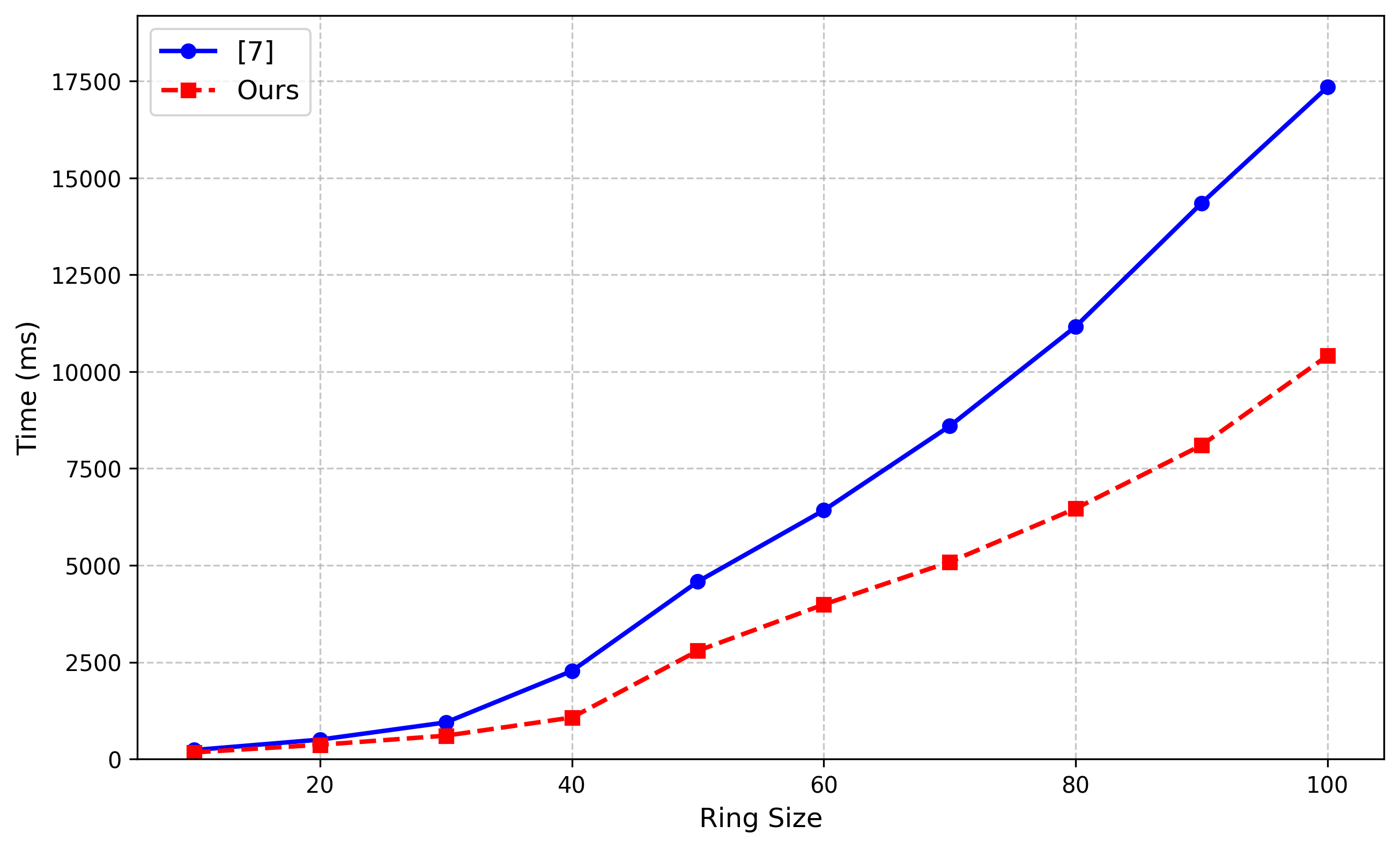}
        \caption{$\mathbf{PreVerify}$}
        \label{fig:PreVerify}
    \end{subfigure}
    \hfill
	\begin{subfigure}[b]{0.32\textwidth}
        \centering
        \includegraphics[width=\linewidth]{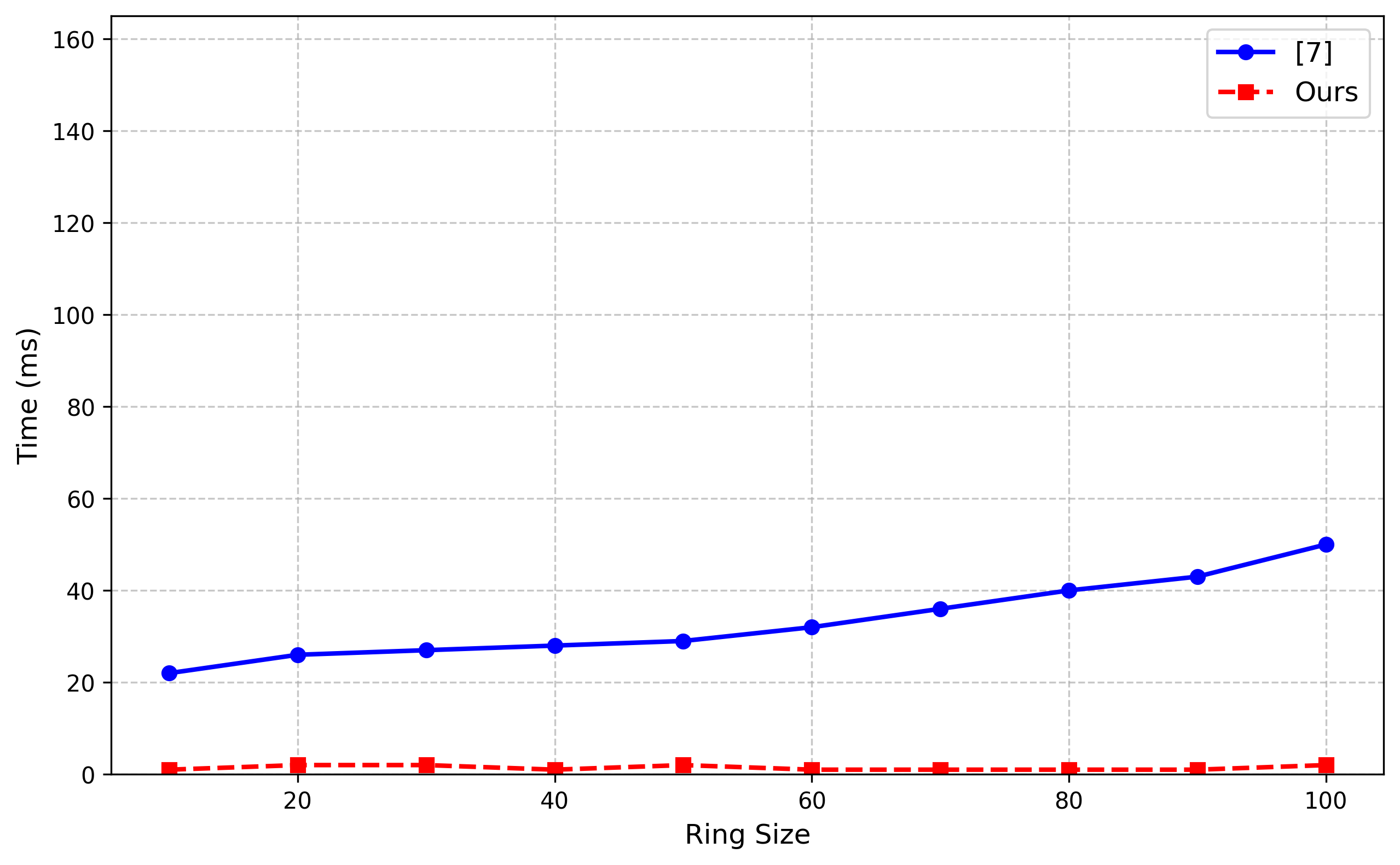}
        \caption{$\mathbf{Adapt}$}
        \label{fig:Adapt}
    \end{subfigure}
    \hfill
    
    \begin{subfigure}[b]{0.32\textwidth}
        \centering
        \includegraphics[width=\linewidth]{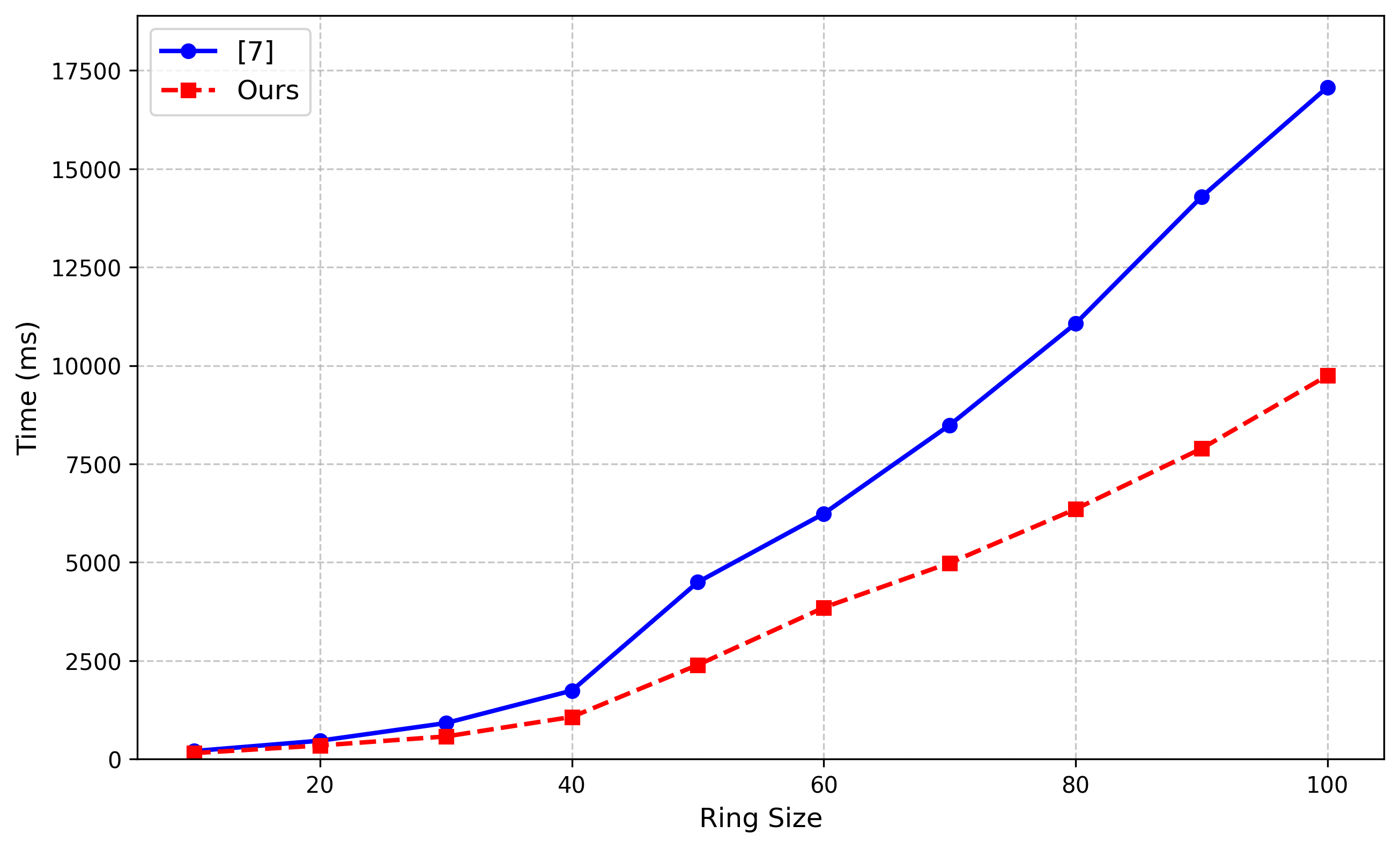}
        \caption{$\mathbf{Verify}$}
        \label{fig:Verify}
    \end{subfigure}
    \hfill
    \begin{subfigure}[b]{0.32\textwidth}
        \centering
        \includegraphics[width=\linewidth]{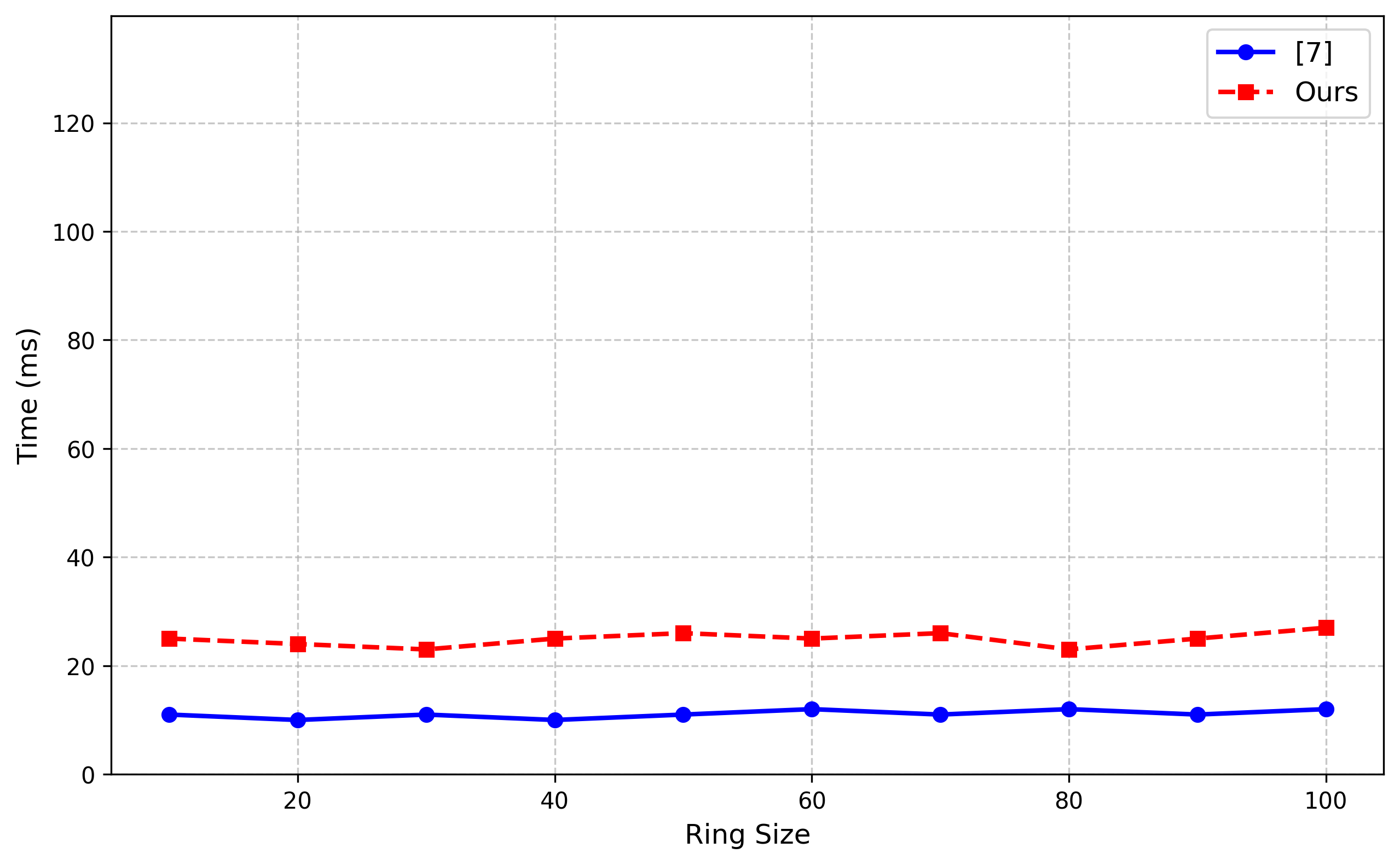}
        \caption{$\mathbf{Ext}$}
        \label{fig:Ext}
    \end{subfigure}
    \hfill
	\begin{subfigure}[b]{0.32\textwidth}
        \centering
        \includegraphics[width=\linewidth]{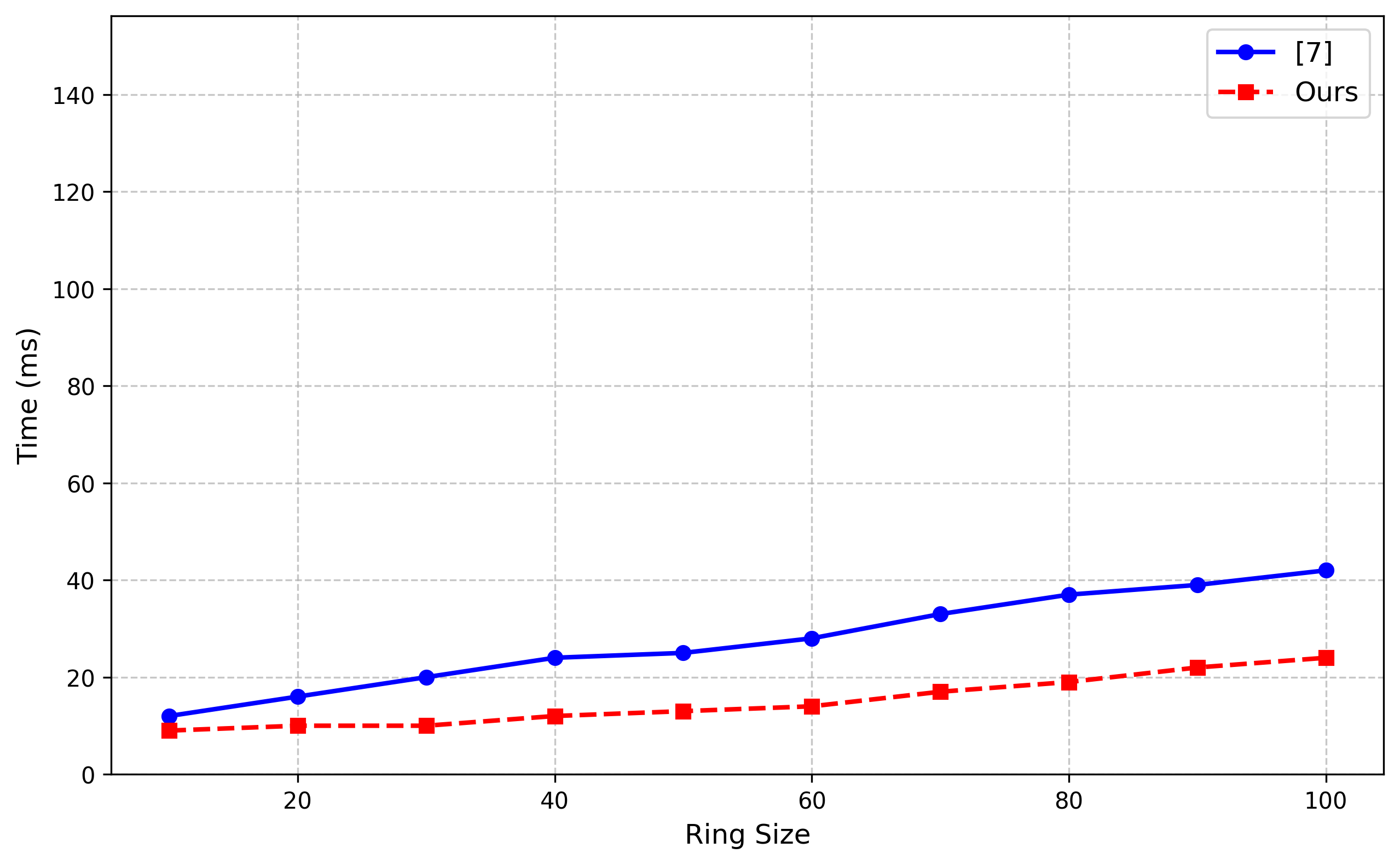}
        \caption{$\mathbf{Link}$}
        \label{fig:Link}
    \end{subfigure}
    \hfill

    \caption{Runtime comparison of different algorithms with varying ring sizes.}
    \label{fig:runtime_comparison}
\end{figure*}

\begin{table*}[h!]
\centering
\caption{Communication cost comparison of different algorithms($n$ is the ring size, $t$ is the threshold size)}
\label{tab:communication_cost}
\begin{tabular}{lccccc}
\toprule
\multirow{2}{*}{Schemes} & \multicolumn{5}{c}{Algorithms} \\
\cmidrule{2-6}
& $\mathbf{Setup}$ & $\mathbf{KeyGen}$ & $\mathbf{GenR}$ & $\mathbf{PreSign}$ & $\mathbf{Adapt}$ \\
\midrule
\cite{wang2024anonymity}   & $|\mathbb{G}_p|$ & $n|\mathbb{G}_p|$ & $|\mathbb{G}_p|$ & $t(n+1)|\mathbb{Z}_p|+t|\mathbb{G}_p|$ & $t(n+1)|\mathbb{Z}_p|+(t+1)|\mathbb{G}_p|$ \\
Ours   & $2|\mathbb{G}_p|$  & $n|\mathbb{G}_p|$  & $2|\mathbb{G}_p|$  & $(n+1)|\mathbb{Z}_p|+t|\mathbb{G}_p|$  & $(n+1)|\mathbb{Z}_p|+t|\mathbb{G}_p|$ \\
\bottomrule
\end{tabular}
\end{table*}

\section{Application}
\label{sec:application}
\begin{figure*}[h!]
  \centering
  \includegraphics[width=0.8\textwidth]{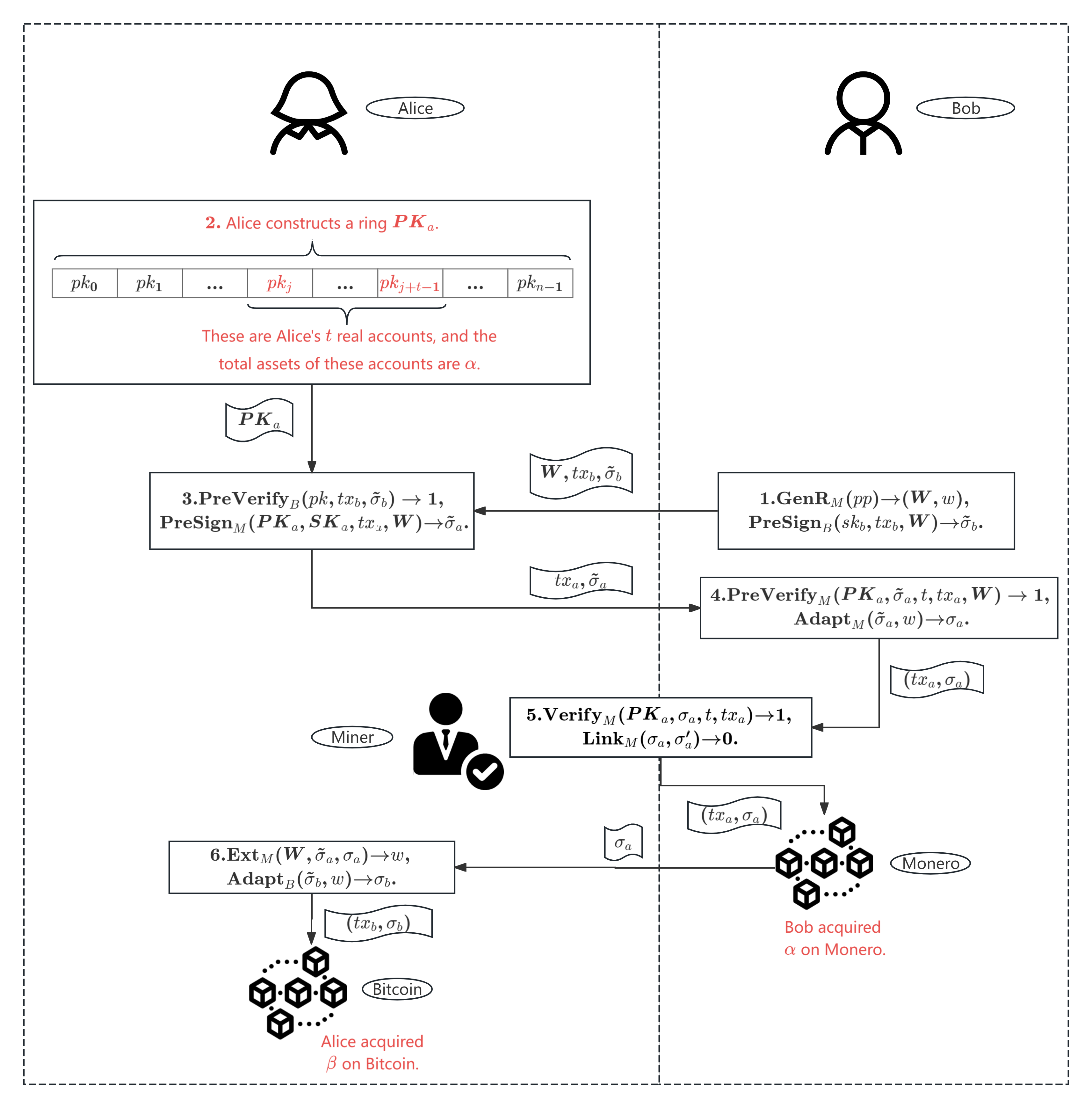}
  \caption{Cross chain atomic swaps between Bitcoin and Monero.}
  \label{fig:atomic_swap}
\end{figure*}

This section explores the practical applications of 
the proposed scheme, focusing on its usefulnessin multi-account joint payment systems. 

Consider the following cross-chain asset swap scenario: 
Alice holds multiple accounts on Monero with a total asset value of $\alpha$, 
and Bob holds assets valued at $\beta$ on Bitcoin. They 
wish to swap assets with each other, meaning Alice 
wants to acquire assets worth $\beta$ on Bitcoin, 
while Bob wants to acquire assets worth $\alpha$ on Monero. 
A crucial property that this swap must satisfy is 
atomicity (also referred to as fairness), i.e., 
the swap terminates in one of two mutually exclusive states: a state where both parties obtain the other's assets, or a state where neither does.

In the aforementioned scenario, Bob can utilize existing 
ECDSA adaptor signatures \cite{aumayr2021generalized} to 
complete his part of the process. However, existing adaptor 
signature schemes are insufficient for Alice to complete 
the process securely. Firstly, for privacy reasons, Alice wants to keep her 
account addresses confidential. This necessitates an adaptor signature 
scheme capable of protecting the privacy of account 
addresses. Secondly, Alice's total assets are distributed 
across multiple accounts, requiring the use of all these 
accounts as inputs during payment. Therefore, an adaptor 
signature scheme supporting multi-account joint payment 
is needed. Furthermore, while protecting the privacy of 
the signer's account addresses, the scheme must also 
achieve linkability of signatures to prevent malicious 
users from using the same set of input addresses to 
generate two different transactions, thereby enabling a 
double-spending attack.

Our proposed LTRAS scheme comprises several integral components that meet the specified requirements. First, the system invokes the $\mathbf{Setup}$ algorithm 
to generate public parameters. Users in the system invoke the $\mathbf{KeyGen}$ algorithm to derive their public-private key pairs. The public key is designated as the user's account address, and a single user may control multiple such key pairs (and thus multiple addresses). This scheme employs a 
$(t, n)$-threshold, meaning signatures require the 
participation of private keys from at least $t$ signers; 
fewer than $t$ keys cannot generate a valid signature. 
Subsequently, Alice and Bob participate in the interactive protocol illustrated in Figure~\ref{fig:atomic_swap}. In this context, the subscript '$B$' refers to operations in the Bitcoin adaptor signature scheme, while '$M$' refers to those in our LTRAS scheme applied to Monero. The protocol proceeds as follows:
\begin{enumerate}
	\item Bob executes the hard relation generation algorithm $\mathbf{GenR}_M$ to produce a statement $\boldsymbol{W}$ and its corresponding witness $w$. He then prepares the transaction data structure $tx_b$ for an asset transfer to Alice on the Bitcoin network. He then obtains a corresponding pre-signature $\tilde{\sigma}_b$ via $\mathbf{PreSign}_B$. Bob sends $\boldsymbol{W}$, $tx_b$, and $\tilde{\sigma}_b$ to Alice, keeping $w$ locally.
	\item Alice first conceals her $t$ accounts 
	$\left(pk_j, ...,\right.$ $\left.\\ pk_{j+t-1}\right)$ within a ring $\boldsymbol{PK}_a$ 
	that includes accounts from other users. The private 
	keys corresponding to these $t$ accounts are denoted $\boldsymbol{SK_a}$.
	\item Alice invokes the pre-signature verification 
	algorithm $\mathbf{PreVerify}_B$ on Bitcoin to validate 
	$\tilde{\sigma}_b$. If verification fails, she 
	terminates the protocol. Subsequently, Alice constructs 
	the transaction data structure $tx_a$ for transferring 
	assets to Bob on Monero. She then invokes the 
	pre-signature generation algorithm $\mathbf{PreSign}_M$ 
	on Monero to generate a pre-signature $\tilde{\sigma}_a$ 
	for this transaction. Alice sends $\tilde{\sigma}_a$ 
	and $tx_a$ to Bob. The transaction cannot be settled instantly, since the pre-signature lacks validity as a blockchain signature.
	\item Bob invokes the pre-signature verification algorithm 
	$\mathbf{PreVerify}_M$ on Monero to validate 
	$\tilde{\sigma}_a$. If verification fails, he 
	terminates the protocol. As Bob possesses the witness 
	$w$ to the hard relation, he can, by the adaptability 
	property of adaptor signatures, invoke the adapt 
	algorithm $\mathbf{Adapt}_M$ to convert the 
	pre-signature $\tilde{\sigma}_a$ received from 
	Alice into a full, blockchain-valid signature 
	$\sigma_a$. He then broadcasts $\sigma_a$ and the 
	transaction $tx_a$ to Monero miners.
	\item Miners invoke the signature verification algorithm 
	$\mathbf{Verify}_M$ to validate $\sigma_a$ and the 
	linking algorithm $\mathbf{Link}_M$ to determine if 
	the transaction shares a link with any previous 
	transactions, thereby detecting potential double-spending. 
	If the signature is valid and no double-spending is 
	detected, the transaction is considered valid. 
	The miners can then include the transaction and its 
	signature in a block. At this point, Bob successfully 
	acquires the assets $\alpha$ on Monero.
	\item After observing this transaction and its signature 
	$\sigma_a$ on the blockchain, Alice, leveraging the 
	witness extractability property of adaptor signatures, 
	invokes the witness extraction algorithm $\mathbf{Ext}_M$. 
	Using the full signature $\sigma_a$ from the chain and 
	her originally generated pre-signature $\tilde{\sigma}_a$, 
	she extracts the witness $w$. Subsequently, based on 
	the adaptability property, Alice can use the witness $w$ 
	and invoke the adapt algorithm $\mathbf{Adapt}_B$ to 
	convert the pre-signature $\tilde{\sigma}_b$ received 
	earlier from Bob into a full signature $\sigma_b$. She 
	then broadcasts $\sigma_b$ and $tx_b$ for verification 
	by Bitcoin miners. Once Bob's transfer transaction 
	$tx_b$ is confirmed, Alice finally acquires the assets $\beta$ on Bitcoin.
\end{enumerate}

This protocol ensures atomicity in the asset swap between
Alice and Bob. If either party fails to complete their
part of the transaction, neither will receive the other's
assets. Additionally, Alice's use of multiple accounts
is seamlessly integrated into the process, with her
privacy preserved through the ring structure, and
double-spending prevented via the linkability feature
of the LTRAS scheme.
\section{Conclusion}
\label{sec:conclusion}
This paper has introduced a novel cryptographic scheme termed LTRAS. The proposed scheme effectively balances the requirements of privacy, atomicity, and efficiency within blockchain protocols.
It uniquely enables efficient, atomic transactions that preserve payer anonymity even when funds are consolidated from multiple input accounts in UTXO-based chains that has been largely unaddressed by prior work.

We proposed a formal definition and security model, and proved that LTRAS satisfies the essential properties of correctness, pre-signature adaptability, witness extractability, existential unforgeability, anonymity, and linkability. The practical viability of LTRAS is demonstrated through a comprehensive experimental evaluation. The results confirm that LTRAS achieves significant performance improvements over existing schemes, particularly in signature generation and verification, which are critical for real-world applications.

Future work would be to remove the adjacency constraint for real accounts within the ring, thereby enhancing anonymity without compromising efficiency. Finally, an instantiation based on post-quantum cryptographic assumptions would be a crucial step toward ensuring long-term security.

\section{Acknowledgement}
This work was supported in part by the National Natural Science Foundation of China under Grant 62372103, the Natural Science Foundation of Jiangsu Province under Grant BK20231149, and the Jiangsu Provincial Scientific Research Center of Applied Mathematics under Grant BK20233002.


\bibliographystyle{cas-model2-names}
\bibliography{cas-refs}



\subsection*{  } 
\setlength\intextsep{0pt} 
\begin{wrapfigure}{l}{25mm}
    \centering
    \includegraphics[width=1in,height=1.25in,clip,keepaspectratio]{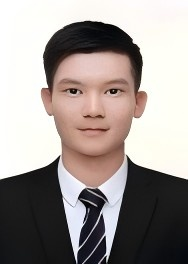}
\end{wrapfigure}
\noindent \textbf{Yi Liang} Yi Liang received the B.E. degree in the School of Computer Science and Technology from Anhui University, Hefei, China, in 2023. 
He is currently working toward the M.E. degree with Southeast University, Nanjing, China. 
His current research interests include applied cryptography and blockchain privacy protection.\par

\hspace*{\fill} 

\setlength\intextsep{0pt} 
\begin{wrapfigure}{l}{25mm}
    \centering
    \includegraphics[width=1in,height=1.25in,clip,keepaspectratio]{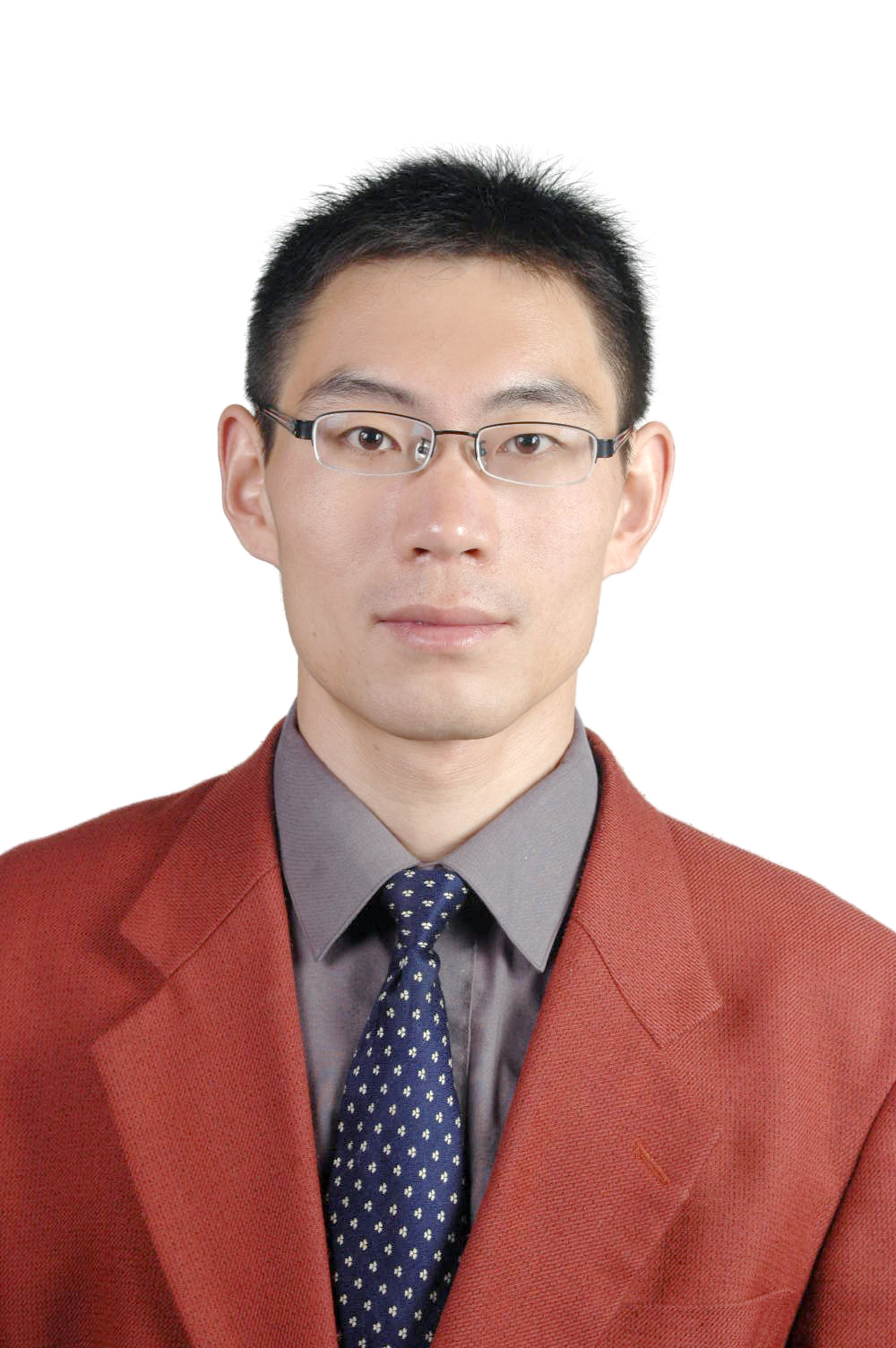}
\end{wrapfigure}
\noindent \textbf{Jinguang Han} Jinguang Han received the Ph.D. degree from the University of Wollongong, Australia, in 2013. He is a Professor with the School of Cyber Science and Engineering, Southeast University, China. His research focuses on access control, cryptography, cloud computing, and privacy-preserving systems.\par

\hspace*{\fill} 

\end{document}